\newcommand{\X}{\bm{\mathcal{X}}}
\newcommand{\R}{\mathbb{R}}
\newcommand{\Z}{\mathbb{Z}}
\newcommand{\x}{\bm{x}}
\newcommand{\E}{\mathbb{E}}
\newcommand{\cost}{\ensuremath{\mathrm{cost}}}
\newcommand{\costinc}{\ensuremath{\mathrm{cost\mbox{-}increase}}}
\newcommand{\activecut}{\ensuremath{\mathrm{active}}}
\newcommand{\U}{\mathcal{U}}
\DeclareMathOperator{\poly}{poly}
\DeclareMathOperator{\polylog}{polylog}
\newtheorem{theorem}{Theorem}
\newtheorem{lemma}{Lemma}
\newcommand{\appref}[1]{Appendix~\ref{#1}}
\title{Nearly-Tight and Oblivious Algorithms\\ for Explainable Clustering}
\author{Buddhima Gamlath \and Xinrui Jia \and Adam Polak \and Ola Svensson}
\date{EPFL}
\begin{document}

\maketitle

\begin{abstract}
We study the problem of explainable clustering in the setting first formalized by Dasgupta, Frost, Moshkovitz, and Rashtchian (ICML 2020).
A $k$-clustering is said to be explainable if it is given by a decision tree where each internal node splits data points with a threshold cut in a single dimension (feature), and each of the $k$ leaves corresponds to a cluster.
We give an algorithm that outputs an explainable clustering that loses at most a factor of $O(\log^2 k)$ compared to an optimal (not necessarily explainable) clustering for the $k$-medians objective, and a factor of $O(k \log^2 k)$ for the $k$-means objective.
This improves over the previous best upper bounds of $O(k)$ and $O(k^2)$, respectively,
and nearly matches the previous $\Omega(\log k)$ lower bound for $k$-medians and our new $\Omega(k)$ lower bound for $k$-means.
The algorithm is remarkably simple. In particular, given an initial not necessarily explainable clustering in $\R^d$, it is oblivious to the data points and runs in time $O(dk \log^2 k)$, independent of the number of data points $n$.
Our upper and lower bounds also generalize to objectives given by higher $\ell_p$-norms.
\end{abstract}

\section{Introduction}

An important topic in current machine learning research is understanding how models actually make their decisions. For a recent overview on the subject of explainability and interpretability, see, e.g.,~\cite{molnar,murdoch}.
Many good methods exist (e.g.~\cite{ribeiro}) for interpreting black-box models, so called \emph{post-modeling explainability}, but this approach has been criticized~\cite{rudin} for providing little insight into the data. Currently, there is a shift towards designing models that are interpretable by design.

Clustering is a fundamental problem in unsupervised learning. A common approach to clustering is to minimize the $k$-medians or $k$-means objectives, e.g., with the celebrated Lloyd's~\cite{Lloyd82} or $k$-means++~\cite{Arthur07} algorithms. Both objectives are also widely studied from a theoretical perspective, and, in particular, they admit constant-factor approximation algorithms running in polynomial time~\cite{Charikar02,Byrka17,Kanungo04,Ahmadian20}.

In their recent paper~\cite{moshkovitz}, Dasgupta et al.~were the first to study provable guarantees for explainable clustering. They define a $k$-clustering to be \emph{explainable} if it is given by a decision tree, where each internal node splits data points with a \emph{threshold cut} in a single dimension (feature), and each of the $k$ leaves corresponds to a cluster (see Figure~\ref{fig:definition}).

\begin{figure}
    \centering
    \begin{subfigure}[b]{0.34\textwidth}
        \includegraphics[width=\textwidth]{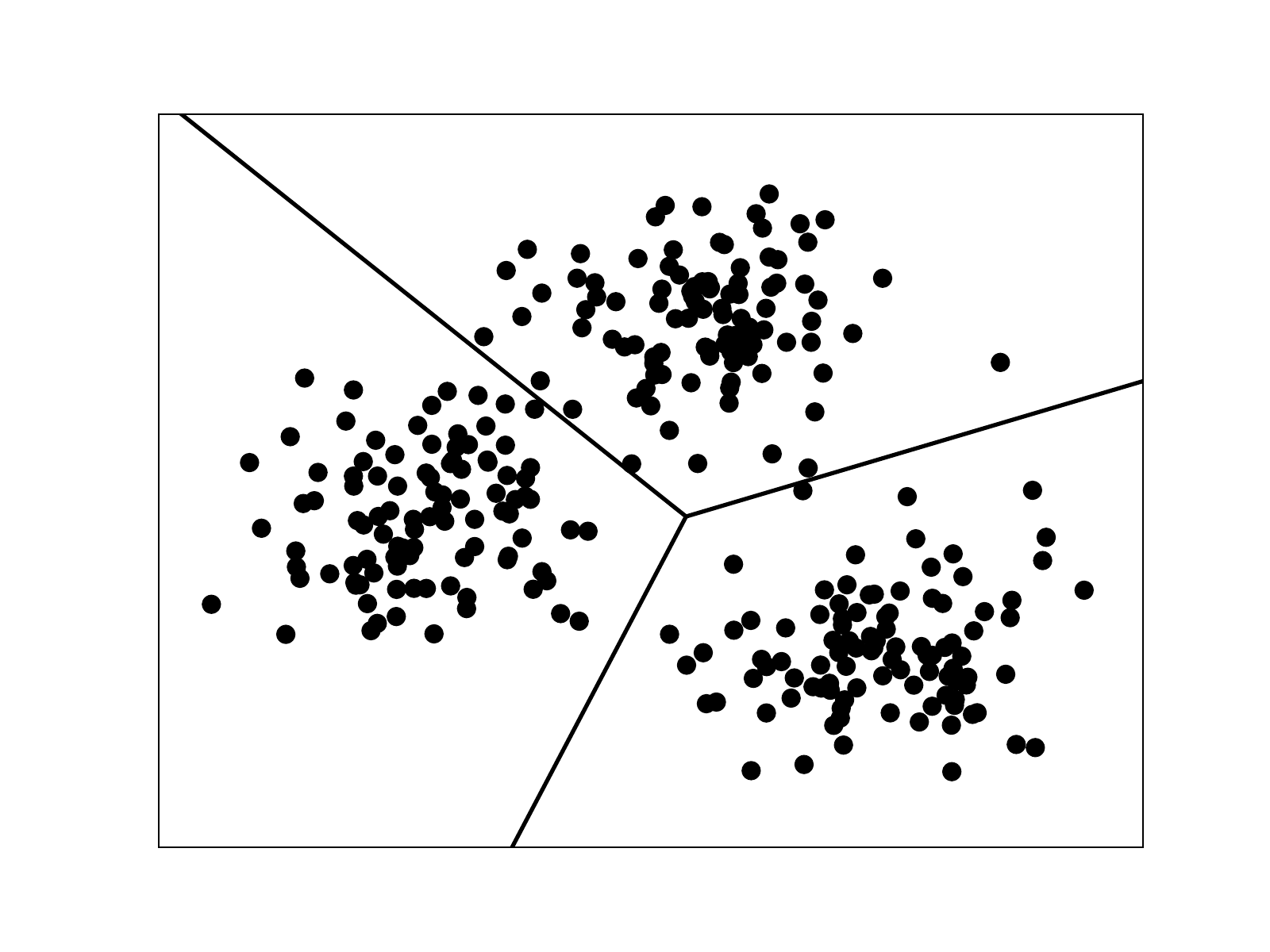}
        \caption{Non-explainable clustering}
    \end{subfigure}
    \hfill
    \begin{subfigure}[b]{0.34\textwidth}
        \includegraphics[width=\textwidth]{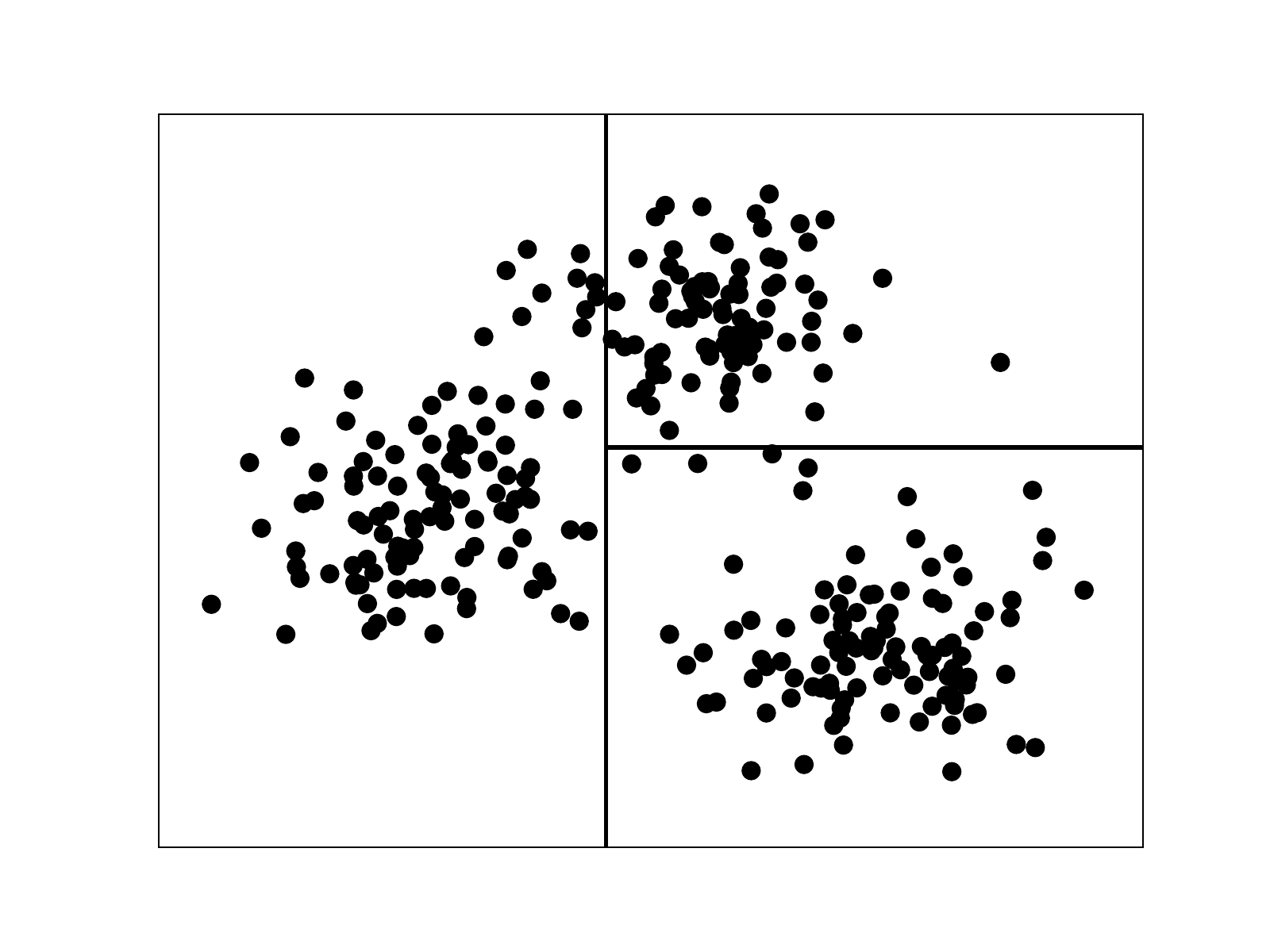}
        \caption{Explainable clustering}
    \end{subfigure}
    \hfill
    \begin{subfigure}[b]{0.3\textwidth}
        \centering
        \begin{tikzpicture}[sibling distance=10em,
            every node/.style = {shape=rectangle,
                draw, align=center,
                top color=white, bottom color=white, scale=0.8}, scale=0.5]
        \node {$x_1 \leq 0.4$}
          child { node {} }
          child { node {$x_2 \leq 0.6$}
            child { node {} }
            child { node {} }
        };
        \end{tikzpicture}
        \vspace{1.5em}
        \caption{Threshold tree}
    \end{subfigure}
    \caption{Examples of an optimal non-explainable and a costlier explainable clustering of the same set of points in $\R^2$, together with the threshold tree defining the explainable clustering.}
    \label{fig:definition}
\end{figure}

This definition is motivated by the desire to have a concise and easy-to-explain reasoning behind how the model chooses data points that form a cluster. See the original paper~\cite{moshkovitz} for an extensive discussion of motivations and a survey of previous (empirical) approaches to explainable clustering.

The central question to study in this setting is that of the \emph{price of explainability}: How much do we have to lose -- in terms of a given objective, e.g., $k$-medians or $k$-means -- compared to an optimal unconstrained clustering, if we insist on an explainable clustering, and can we efficiently construct such a clustering?

Dasgupta et al.~\cite{moshkovitz} proposed an algorithm that, given an unconstrained (non-explainable) \emph{reference clustering}\footnote{A reference clustering can be obtained, e.g., by running a constant-factor approximation algorithm for a given objective function. Then, the asymptotic upper bounds of the explainable clustering cost compared to the reference clustering translate identically to the bounds when compared to an optimal clustering.}, produces an explainable clustering losing at most a multiplicative factor of $O(k)$ for the $k$-medians objective and $O(k^2)$ for $k$-means, compared to the reference clustering. They also gave a lower bound showing that an $\Omega(\log k)$ loss is unavoidable, both for the $k$-medians and $k$-means objective. Later, Laber and Murtinho~\cite{laber} improved over the upper bounds in a low-dimensional regime $d \le k / \log(k)$, giving an $O(d \log k)$-approximation algorithm for explainable $k$-medians and an $O(d k\log k)$-approximation algorithm for explainable $k$-means.

\subsection{Our contributions}

\paragraph{Improved clustering cost.}
We present a randomized algorithm that, given $k$ centers defining a reference clustering and a number $p \geq 1$, constructs a threshold tree that defines an explainable clustering that is, in expectation, worse than the reference clustering by at most a factor of $O(k^{p-1} \log^2 k)$ for the objective given by the $\ell_p$-norm. That is $O(\log^2 k)$ for $k$-medians and $O(k \log^2 k)$ for $k$-means.

\paragraph{Simple and oblivious algorithm.}
Our algorithm is remarkably simple. It samples threshold cuts uniformly at random (for $k$-medians; $k$-means and higher $\ell_p$-norms need slightly fancier distributions) until all centers are separated from each other. In particular, the input to the algorithm includes only the centers of a reference clustering and not the data points.

As a consequence, the algorithm cannot overfit the data (any more than the reference clustering possibly already does), and the same expected cost guarantees hold for any future data points not known at the time of the clustering construction. Besides, the algorithm is fast; its running time does not depend on the number of data points $n$. A naive implementation runs in time $O(dk^2)$, and in Section~\ref{sec:implementation}, we show how to improve it to $O(dk \log^2 k)$ time, which is near-linear in the input size $dk$ of the $k$ reference centers.

\paragraph{Nearly-tight bounds.} We complement our results with a lower bound. We show how to construct instances of the clustering problem such that any explainable clustering must be at least $\Omega(k^{p-1})$ times worse than an optimal clustering for the $\ell_p$-norm objective. In particular, this improves the previous $\Omega(\log k)$ lower bound for $k$-means~\cite{moshkovitz} to $\Omega(k)$ .

In consequence, we give a nearly-tight answer to the question of the price of explainability. We leave a $\log(k)$ gap for $k$-medians, and a $\log^2(k)$ gap for $k$-means and higher $\ell_p$-norm objectives. See Table~\ref{tab:bounds} for a summary of the upper and lower bounds discussed above and recent independent works discussed in Section~\ref{sec:independent}.

\begin{table}
\caption{\textbf{Algorithms and lower bounds for explainable $k$-clustering in $\R^d$.} For a given objective function, how large a multiplicative factor do we have to lose, compared to an optimal unconstrained clustering, if we insist on an explainable clustering?}
\label{tab:bounds}
\centering
\renewcommand{\arraystretch}{1.3}
\begin{tabular}{@{}l@{\qquad}lllr@{}}
\toprule
  & $k$-medians & $k$-means & $\ell_p$-norm & \\
\midrule
\multirow{6}{*}{\rotatebox[origin=c]{90}{Algorithms}}
  & $O(k)$ & $O(k^2)$ & & Dasgupta et al.~\cite{moshkovitz} \\
  & $O(d \log k)$ & $O(kd \log k)$ & & Laber and Murtinho \cite{laber} \\
  & $O(\log^2 k)$  & $O(k \log^2 k)$ & $O(k^{p-1} \log^2 k)$ & \textbf{This paper} \\
  & $O(\log k \log \log k)$ & $O(k \log k \log \log k)$ & & Makarychev and Shan~\cite{makarychev} \\ 
  & $O(\log k \log \log k)$ & $O(k \log k)$ & & Esfandiari et al.~\cite{esfandiari} \\
  & $O(d \log^2 d)$ & & &  Esfandiari et al.~\cite{esfandiari} \\
  & & $O(k^{1-\nicefrac{2}{d}} \polylog k)$ & & Charikar and Hu~\cite{charikar2021nearoptimal} \\
\midrule
\multirow{5}{*}{\rotatebox[origin=c]{90}{Lower bounds}}
  & $\Omega(\log k)$ & $\Omega(\log k)$ & & Dasgupta et al.~\cite{moshkovitz} \\
  & & $\Omega(k)$ & $\Omega(k^{p-1})$ & \textbf{This paper} \\
  & & $\Omega(\nicefrac{k}{\log k})$ & & Makarychev and Shan~\cite{makarychev} \\ 
  & $\Omega(\min(d, \log k))$ & $\Omega(k)$ & & Esfandiari et al.~\cite{esfandiari} \\
  & & $\Omega(k^{1-\nicefrac{2}{d}} / \polylog k)$ & & Charikar and Hu~\cite{charikar2021nearoptimal} \\
\bottomrule
\end{tabular}
\end{table}

\subsection{Technical overview}
\label{overview}

The theoretical guarantees obtained by Dasgupta et al.~\cite{moshkovitz} depend on the number of clusters $k$ and the height of the threshold tree obtained $H$. Their algorithm loses, compared to the input reference clustering, an $O(H)$ factor for the $k$-medians cost and $O(Hk)$ for $k$-means. 
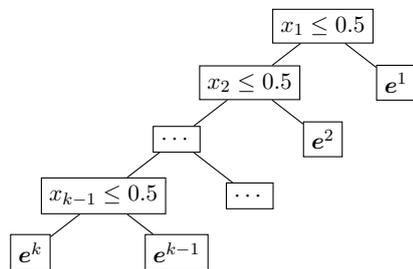
\begin{wrapfigure}{r}{0.4\textwidth}
  \centering
  \begin{tikzpicture}[sibling distance=10em, every node/.style = {shape=rectangle, draw, align=center, top color=white, bottom color=white, scale=0.8}, scale=0.5]
    \node {$x_1 \leq 0.5$}
      child { node {$x_2 \leq 0.5$} 
          child { node {$\cdots$}
              child { node {$x_{k-1} \leq 0.5$}
              child { node {$\bm{e}^k$}}
              child { node {$\bm{e}^{k-1}$}}
              }
              child { node {$\cdots$}}
              }
          child { node {$\bm{e}^2$}}
          }
      child { node {$\bm{e}^1$}};
  \end{tikzpicture}
  \caption{An optimal threshold tree for the $k$ standard basis vectors in $\R^{k}$. Any optimal threshold tree on this data set has height $k-1$.}
  \label{fig:basis_vectors}
\end{wrapfigure}
These approximations are achieved by selecting a threshold cut that separates some two centers and minimizes the number of points that get separated from their centers in the reference clustering.
This creates two children of a tree node, and the threshold tree is created by recursing on each of the children.
The height of the tree $H$ may need to be $k-1$. For example, consider the data set in $\R^{k}$ consisting of the $k$ standard basis vectors (see Figure~\ref{fig:basis_vectors}).
Laber and Murtinho~\cite{laber} replace the dependence on $H$ with $d$, the dimension of the data set, by first constructing optimal search trees for each dimension and then carefully using them to guide the construction of the threshold tree.
In our work, we obtain improved guarantees by using randomized cuts that are oblivious to the data points and depend only on the reference centers, in contrast to the above-mentioned two prior approaches, which selected cuts based on the data points.

There are two components to achieving our improved guarantees that correspond with two aspects of the minimum cut algorithm of~\cite{moshkovitz}: the use of the minimum cut, and the height of the threshold tree produced.
The first observation is that, for the $\ell_1$-norm, we do not lose in the analysis by taking a cut uniformly at random compared to always using the minimum cut.
(The corresponding distribution for higher $\ell_p$-norms is proportional to the $p$-th power of the distance to the closest center.)
Indeed, using a random cut makes us robust against specifically engineered examples, such as the one that fools the minimum cut algorithm of~\cite{moshkovitz} 
(see \appref{example-k}).
In that example we add dimensions in which a cut is minimum, but these minimum cuts produce a tree of height $\Omega(k)$ whose cost is $\Omega(k)$ times larger than the optimum.

However, threshold trees of height $\Omega(k)$ are unavoidable in certain instances as seen in the example with $k$ standard basis vectors (Figure~\ref{fig:basis_vectors}).
This leads to our second observation that it is necessary to use a tighter upper bound on the cost of reassigning a point since any height $k-1$ threshold tree produced on this example is actually optimal. 
Using the diameter definition in~\cite{moshkovitz}, the cost of each cut is upper bounded by $k$ while the actual distance between any two centers is at most $2$, which is also a valid upper bound for the reassignment cost.
Hence, we use the maximum distance between any two centers to upper bound the cost of misclassifying a point.

\paragraph{Open problems.}
We conjecture that our $k$-medians algorithm is asymptotically optimal.
In particular, we believe the actual approximation ratio of our algorithm is $1 + H_{k-1}$, where $H_{n}$ is the $n$-th harmonic number (recall that $\ln(n) \leq H_n \leq 1+ \ln(n)$).
There are two potential barriers in our current analysis that prevent us from demonstrating this optimality.
The first is that our upper bound on the cost increase of assigning a single point to a wrong center is not tight, and secondly, our analysis may include the cost of the same point multiple times.
Despite the further developments mentioned in Section~\ref{sec:independent}, it still remains to fully resolve the correct asymptotic price of explainability.

Some potential directions to expanding our work include parallelizations, generalizing the notion of explainability, and defining natural clusterability assumptions under which the price of explainability is reduced. 
Constructing a threshold tree seems inherently sequential; it would be interesting to explore parallelizations for faster implementation.
Another direction would be to allow each node to be a hyperplane in a chosen number of dimensions instead of only splitting along one feature. Finally, it seems a non-trivial question to find a right clusterability assumption on the data points distribution -- that would allow us to overcome the existing lower bounds -- because these lower bounds are in fact very ``clusterable'' instances, in the traditional usage of this notion.

\subsection{Independent work}
\label{sec:independent}
We note independent further developments by Makarychev and Shan~\cite{makarychev}; Esfandiari, Mirrokni, and Narayanan~\cite{esfandiari}; and Charikar and Hu~\cite{charikar2021nearoptimal}.

Makarychev and Shan~\cite{makarychev} showed $O(\log k \log \log k)$ and $O(k \log k \log \log k)$ upper bounds for $k$-medians and $k$-means, respectively, thus improving over our bounds by a factor of $\nicefrac{\log k}{\log \log k}$. Their $k$-medians algorithm is essentially the same as our \emph{modified} Algorithm~\ref{alg:random} (see Section~\ref{sec:mod-l1-alg}), but they provide a tighter analysis. Their $k$-means upper bound follows from combining their $k$-medians algorithm with their insightful reduction from $k$-means to $k$-medians that loses an $O(k)$ factor. However, the $k$-means algorithm resulting from that combination is essentially the same as our Algorithm~\ref{alg:randomp}. They also provide an $\Omega(\nicefrac{k}{\log k})$ lower bound for $k$-means, which is slightly worse than ours. Finally, they study the explainable $k$-medoids problem (i.e., $k$-medians with $\ell_2$ norm), and provide an $O(\log^{\nicefrac{3}{2}} k)$ upper bound and an $\Omega(\log k)$ lower bound.

Esfandiari, Mirrokni, and Narayanan~\cite{esfandiari} also give an $O(\log k \log \log k)$ upper bound for $k$-medians. Their algorithm is essentially the same as our (unmodified) Algorithm~\ref{alg:random}, and, again, they provide a tighter analysis. Moreover, they prove that their algorithm gives an $O(d \log^2 d)$ guarantee, improving over the work of Laber and Murtinho~\cite{laber} for low-dimensional spaces. They also give an $O(k \log k)$ upper bound for $k$-means, improving over the result of Makarychev and Shan by a factor of $\log \log k$. Their $k$-means algorithm is similar to our Algorithm~\ref{alg:randomp} but samples cuts from a different distribution. They also match our $\Omega(k)$ lower bound for $k$-means, and improve the $k$-medians lower bound of Dasgupta et al.~\cite{moshkovitz} to $\Omega(\min(d, \log k))$.

Charikar and Hu~\cite{charikar2021nearoptimal} focus on explainable $k$-means and present an $O(k^{1-\nicefrac{2}{d}} \poly(d \log k))$-approximation algorithm, which is better than any previous algorithm when $d = O(\nicefrac{\log k}{\log \log k})$ (in particular, for any constant dimension $d$). Resorting to an $O(k \polylog k)$-approximation algorithm (e.g., \cite{makarychev}) when this is not the case, they obtain an $O(k^{1-\nicefrac{2}{d}} \polylog k)$ upper bound. They match it with an $\Omega(k^{1-\nicefrac{2}{d}} / \polylog k)$ lower bound, which is tight up to polylogarithmic factors.

\section{Preliminaries}
\label{prelim}

\newcommand{\defn}[1]{\textbf{\emph{#1}}}

Following the notation of~\cite{moshkovitz}, we use bold variables for vector values and corresponding non-bold indexed variables for scalar coordinates.
Intuitively, a clustering is explainable because the inclusion of a data point $\x = [x_1, \dots, x_d]$ to a particular cluster is ``easily explained'' by whether or not $\x$ satisfies a series of inequalities of the form $x_i \leq \theta$.
These inequalities are called \defn{threshold cuts}, defined by a coordinate $i \in [d]$ (denoting the set $\{1, 2, \dots, d\}$) and a threshold $\theta \in \R$.
More precisely, a \defn{threshold tree} is a binary tree where each non-leaf node is a threshold cut $(i, \theta)$ which assigns the point $\bm{x}$ of that node into the left child if $x_i \leq \theta$ and the right child otherwise.
A~clustering is \defn{explainable} if the clusters are in bijection to the leaves of a threshold tree with exactly $k$~leaves that started with all the data points at the root. 

Given a set of points $\X = \{\x^1, \x^2, \dots, \x^n\} \subseteq \R^d$ and its clustering $\{C^1, \dots, C^k\}$, $\bigcup_{j=1}^{k} C^j=\X$, the \defn{$k$-medians cost} of the clustering is defined in~\cite{moshkovitz} as
\[ \mbox{cost}_1(C^1, \dots, C^k)
 = \sum_{j=1}^k \min_{\bm{\mu}\in\R^d} \sum_{\bm{x} \in C^j} \| \bm{x} - \bm{\mu} \|_1
 = \sum_{j=1}^k \sum_{\bm{x} \in C^j} \| \bm{x} - \mbox{median}(C^j) \|_1.\]
The \defn{$k$-means} cost is defined analogously with the square of the $\ell_2$ distance of every point to mean($C^j$).

For a set of centers $\U = \{\bm{\mu}^1, \dots, \bm{\mu}^k\} \subseteq \R^d$, a non-explainable clustering $\{\widetilde{C}^1, \dots, \widetilde{C}^k\}$ of $\X$ is given by $\widetilde{C}^j=\{\bm{x} \in \X \mid \bm{\mu}^j = \operatornamewithlimits{arg\,min}_{\bm{\mu}\in\U} \| \bm{x} - \bm{\mu} \|_1\}$, and we write $\mbox{cost}_1(\U) = \mbox{cost}_1(\widetilde{C}^1, \dots, \widetilde{C}^k)$. Note that $\mbox{cost}_1(\U) = \sum_{\bm{x} \in \X} \min_{\bm{\mu} \in \U} \| \bm{x} - \bm{\mu} \|_1$.

Given a threshold tree $T$, the leaves of $T$ induce an explainable clustering $\{\widehat{C}^1, \dots, \widehat{C}^k\}$, and we write $\mbox{cost}_1(T) = \mbox{cost}_1(\widehat{C}^1, \dots, \widehat{C}^k)$. 
In the analyses, however, we often upper bound the cost of each explainable cluster $\widehat{C}^j$ using the corresponding reference center $\bm{\mu}^j$:
\[\mbox{cost}_1(T) \le \sum_{j=1}^k \sum_{\x \in \widehat{C}^j} \| \x - \bm{\mu}^j \|_1.\]
These may not be optimal center locations, yet we are still able to obtain guarantees that are polylog away from being tight.

We generalize the above to higher $\ell_p$-norms, $p \geq 1$, as follows
\[
\mbox{cost}_p(C^1, \dots, C^k)
 = \sum_{j=1}^k \min_{\bm{\mu}\in\R^d} \sum_{\bm{x} \in C^j} \| \bm{x} - \bm{\mu} \|_p^p,
 \qquad
\mbox{cost}_p(T) \le \sum_{j=1}^k \sum_{\x \in \widehat{C}^j} \| \x - \bm{\mu}^j \|_p^p.\]

\section{Explainable \texorpdfstring{$k$}{k}-medians clustering}
\label{ell-1}

\newcommand{\calB}{\mathcal{B}}
\newcommand{\parent}{\operatorname{parentNode}}
\newcommand{\nextnode}{\operatorname{lastNode}}
\newcommand{\threshcut}{\operatorname{thresholdCut}}

In this section we present our algorithm for explainable \(k\)-medians and its analysis. 
Recall that our algorithm is oblivious to the data points: It determines the threshold tree using only the center locations.
The algorithm simply samples a sequence of cuts until it defines a threshold tree with each center belonging to exactly one leaf.
In what follows, we elaborate on this process in detail.

\newcommand{\dimrange}{\operatorname{I}}
\newcommand{\allcuts}{\operatorname{AllCuts}}
The algorithm's input is a set of centers \(\mathcal{U} = \{\bm{\mu}^1, \bm{\mu}^2, \ldots, \bm{\mu}^k \} \subset \R^d \).
We consider cuts that intersect the bounding box of $\U$. 
Letting \( \dimrange_i = [\min_{j \in [k] }\mu^j_i, \max_{j \in [k]}\mu^j_i] \) be the interval between the minimum and maximum \(i\)-coordinate of centers,
the set of all possible cuts that intersect the bounding box of $\U$ is \(\allcuts = \left\{ (i, \theta) : i \in [d],\theta \in \dimrange_i \right\}. \) 
Our algorithm uses a stream of independent uniformly random cuts from \( \allcuts \). In particular, the probability density function of \((i,\theta) \in \allcuts \) is \(1/L\) where \(L = \sum_{i\in[d]}|I_i| \) is the sum of the side lengths of the bounding box of \(\mathcal{U}\).

The algorithm simply takes cuts from this stream until it produces a threshold tree.
To this end, it maintains a tentative set of tree leaves, each identified by a subset of centers, and continues until it has \(k\) leaves of singleton sets.
We say a cut \emph{splits} a leaf if the cut properly intersects with the bounding box of the corresponding subset of centers. 
In other words, a cut \((i,\theta)\) splits a leaf \(B\) if and only if the two sets \(B^- = \{\bm{\mu} \in B : {\mu}_i \leq \theta \} \) and \(B^+ = \{\bm{\mu} \in B : {\mu}_i > \theta \} \) are both non-empty.
At the beginning, the algorithm starts with a single leaf identified by $\mathcal{U}$, the set of all centers.
It then samples a cut \((i,\theta)\) and checks if it splits any existing leaf. 
If so, it saves the cut, and for each leaf \(B\) that gets split by the cut into \(B^{-} \) and \(B^{+} \), adds \(B^-\) and \(B^+\) as two new leaves rooted at \(B\).
These saved cuts define the output threshold tree.
We present the pseudo-code of this algorithm in \cref{alg:random}. 

\newcommand{\treeleaves}{\operatorname{Leaves}}
\begin{algorithm}[ht]
  \DontPrintSemicolon
   \textbf{Input: } A collection of \( k \) centers \(\mathcal{U} = \{\bm{\mu}^1, \bm{\mu}^2, \ldots, \bm{\mu}^k\} \subset \R^d \). \;
   \textbf{Output: } A threshold tree with \(k\) leaves. \;
   \(\treeleaves \gets \{ \mathcal{U} \}\) \; 
   \While{\( |\treeleaves| < k \) } { \label{line:while}
        Sample \((i, \theta) \) uniformly at random from \(\allcuts\). \label{line:sample} \;  
        \For{each \(B \in \treeleaves \) that are split by \( (i, \theta) \)} { \label{line:for}
            Split \(B\) into \(B^-\) and \(B^+\) and add them as left and right children of \(B\).\label{line:split} \; 
            Update \(\treeleaves\). \label{line:update_leaves}
        }
   }
   \textbf{return} the threshold tree defined by all cuts that separated some \( B \). \;
   \caption{Explainable $k$-medians algorithm.}
   \label{alg:random}
\end{algorithm}

\subsection{Cost analysis}
\label{sec:l1-cost-analysis}
We show that Algorithm~\ref{alg:random} satisfies the following guarantees.
\begin{theorem}
  Given reference centers $\mathcal{U} = \{\bm{\mu}^1, \bm{\mu}^2, \ldots, \bm{\mu}^k\}$, Algorithm~\ref{alg:random} outputs a threshold tree $T$ whose expected cost satisfies
    \[
      \E[\cost_1(T)] \leq O(\log(k) \cdot (1+ \log(c_{\max}/c_{\min}))) \cdot \cost_1(\mathcal{U})\,,
    \]
    where $c_{\max}$ and $c_{\min}$ denote the maximum and minimum pairwise distance between two centers in $\mathcal{U}$, respectively. Furthermore, with probability at least $1-1/k$, Algorithm~\ref{alg:random} samples a threshold tree $T$ from a distribution with $\E[\cost_1(T)] \leq O(\log^2 k) \cdot  \cost_1(\mathcal{U})$.
    \label{thm:l1-alg}
\end{theorem}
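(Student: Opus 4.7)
The plan is to reduce to a per-point analysis via the triangle inequality. Writing $\nu(\x) \in \U$ for the center in $\x$'s singleton leaf of the output tree,
\[
\cost_1(T) \;\le\; \sum_{\x \in \X}\|\x - \nu(\x)\|_1 \;\le\; \cost_1(\U) + \sum_{\x \in \X} \|\mu^{j(\x)} - \nu(\x)\|_1,
\]
so it suffices to bound $\E[\|\mu-\nu(\x)\|_1]$ for each fixed point $\x$ with reference center $\mu = \mu^{j(\x)}$ and then sum over $\x$.

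The key technical ingredient I would prove is a pairwise ``race'' bound: for every other center $\nu$,
\[
\Pr[\nu(\x) = \nu] \;\le\; \frac{\|\x - \mu\|_1}{\|\mu - \nu\|_1}.
\]
The argument is that if $\nu(\x)=\nu$, then the first sampled cut that does anything non-trivial to the triple $(\mu,\x,\nu)$ must isolate $\mu$ from both $\x$ and $\nu$ (otherwise either $\x$ stays with $\mu$, or $\x$ gets separated from $\nu$, both incompatible with $\nu(\x)=\nu$). Classifying cuts by which of the three points they isolate, a per-coordinate computation shows that the probability masses $p_A,p_B,p_C$ of cuts isolating $\mu,\nu,\x$ respectively satisfy $p_A+p_B = \|\mu-\nu\|_1/L$, $p_A+p_C = \|\mu-\x\|_1/L$, and $p_B+p_C = \|\x-\nu\|_1/L$; the ratio $p_A/(p_A+p_B+p_C)$ then simplifies via the triangle inequality to the claimed bound.

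To lift the pairwise bound to the aggregate factor $O(\log k\,(1+\log(c_{\max}/c_{\min})))$, I would partition the other centers into geometric distance scales $S_\ell = \{\nu : \|\mu-\nu\|_1 \in [2^\ell c_{\min}, 2^{\ell+1}c_{\min})\}$ for $\ell=0,\ldots,L=\lceil\log_2(c_{\max}/c_{\min})\rceil$, giving
\[
\E[\|\mu-\nu(\x)\|_1] \;\le\; \sum_\ell 2^{\ell+1}c_{\min}\cdot\Pr[\nu(\x) \in S_\ell].
\]
The main obstacle is showing $\Pr[\nu(\x)\in S_\ell] \le O(\log k)\,\|\x-\mu\|_1/(2^\ell c_{\min})$: a naive union bound over $|S_\ell|$ candidates using only the pairwise estimate loses a factor of $|S_\ell|$ (as large as $k$) rather than the desired $\log k$. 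I would close this gap by arguing that conditional on $\nu(\x)\in S_\ell$, the identity of $\nu(\x)$ follows a random-binary-search-tree-style competition among the centers of $S_\ell$ --- the first cut separating $\mu$ from a scale-$\ell$ center subdivides $S_\ell$ roughly uniformly --- so only an $O(\log|S_\ell|)\le O(\log k)$-deep branch of candidates survives. Summing over the $L+1$ scales and over data points then yields the first part of the theorem.

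For the high-probability bound I would replace $\log(c_{\max}/c_{\min})$ by $\log k$ outside a $1/k$-probability event. Using a concentration argument on the branching process that tracks the leaf containing each center, together with a union bound over the $k$ centers, one can show that with probability $\ge 1-1/k$ every center is isolated within $O(\log^2 k)$ saved cuts; under this event only $O(\log k)$ distinct geometric scales of distances from $\mu$ remain alive while $\x$ is being routed, so the per-point bound tightens to $O(\log^2 k)\,\|\x-\mu\|_1$, giving the $O(\log^2 k)\,\cost_1(\U)$ guarantee overall.
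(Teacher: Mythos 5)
Your framework is genuinely different from the paper's: you aim for a per-point bound on $\E[\|\mu^{j(\x)}-\nu(\x)\|_1]$ via a pairwise race estimate and a decomposition into geometric distance scales, whereas the paper never reasons about which center $\x$ ends up at. Instead, it tracks the \emph{global} quantity $c_{\max}(t)$ (the maximum intra-leaf distance) and shows (Lemma~\ref{lemma:l1-drop}) that $O(\log k\cdot L/c_{\max}(t))$ random cuts halve it with probability $1-1/k$, then charges the cost of each ``halving round'' against $\cost_1(\U)$ via Lemma~\ref{lemma:l1-nr-cut}; the $(1+\log(c_{\max}/c_{\min}))$ factor is simply the number of rounds. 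Your pairwise race bound $\Pr[\nu(\x)=\nu]\le\|\x-\mu\|_1/\|\mu-\nu\|_1$ is a clean and correct observation (it already follows from comparing only the ``isolate $\mu$'' cuts against the ``isolate $\nu$'' cuts, both of which always split the relevant leaf while $\mu,\nu$ are together, so the applied/sampled distinction does not hurt you), and the scale decomposition correctly reduces the theorem to the per-scale inequality $\Pr[\nu(\x)\in S_\ell]\le O(\log k)\cdot\|\x-\mu\|_1/(2^\ell c_{\min})$.

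That per-scale inequality, however, is exactly where your argument has a genuine gap. Since the events $\{\nu(\x)=\nu\}$ for $\nu\in S_\ell$ are disjoint, $\Pr[\nu(\x)\in S_\ell]$ \emph{is} the sum of $\Pr[\nu(\x)=\nu]$, and your pairwise bound gives $|S_\ell|\cdot\|\x-\mu\|_1/(2^\ell c_{\min})$. The ``random-BST / only $O(\log k)$ candidates survive'' sketch does not reduce this sum: it is not an argument that the probabilities $\Pr[\nu(\x)=\nu]$ are each zero for all but $O(\log k)$ centers, and no mechanism is given for why the aggregate probability mass that leaks out of $\mu$'s leaf while a scale-$\ell$ center is still present is only a $\log k$ factor larger than the per-candidate bound. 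Proving such a statement in fact seems to require controlling how long $\mu$'s leaf retains \emph{any} center at scale $\ge 2^\ell c_{\min}$, i.e., essentially the diameter-halving machinery of Lemmas~\ref{lemma:l1-nr-cut} and~\ref{lemma:l1-drop} that your route was trying to avoid. The high-probability (``furthermore'') part of your proposal has the same issue compounded: the paper derives it by coupling Algorithm~\ref{alg:random} with a modified algorithm that forbids cuts between very close centers and showing (Appendix~\ref{sec:thm2-implies-thm1}) that the two agree with probability $1-1/k$; your sketch (``every center is isolated within $O(\log^2 k)$ saved cuts'') does not obviously connect to the $\cost_1$ bound, and the number of saved cuts is always exactly $k-1$, so the quantity you invoke is not well-defined as stated. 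Until the per-scale bound is proved, this does not constitute a proof of the theorem.
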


The (more involved) proof of the furthermore statement is discussed in Section~\ref{sec:mod-l1-alg} and formally proved in 
\appref{sec:omitted-sec-3}.
We remark that the success probability $1-1/k$ can be made larger by only slightly increasing the hidden constant in the cost guarantee. Furthermore, in Section~\ref{sec:mod-l1-alg}, we give a slight adaptation of the above algorithm that has an expected cost bounded by $O(\log^2 k) \cdot \cost_1(\mathcal{U})$. 
The remaining part of this section is devoted to proving the upper bound on the expected cost of Algorithm~\ref{alg:random}. 
\paragraph{Proof outline.}
First, in Lemma~\ref{lemma:l1-nr-cut}, we show that a random cut in expectation separates $\cost_1(\mathcal{U})/L$ points from their closest centers. Indeed, note that the probability of separating a point $\bm{x}$ from its center $\pi(\bm{x})$ is at most $\|\bm{x} - \pi(\bm{x})\|_1/L$, and on the other hand, $\cost_1(\mathcal{U})   = \sum_{x\in \X} \|\bm{x} - \pi(\bm{x})\|_1$, hence the bound follows from linearity of expectation. Each such separated point incurs a cost of at most $c_{\max}$. Next, in Lemma~\ref{lemma:l1-drop}, we show that with good probability $O(\log(k) \cdot L/c_{\max})$ random cuts separate all pairs of centers that are at distance at least $c_{\max} / 2$ from each other. Morally, the cost of halving $c_{\max}$, which we will need to perform $1+ \log(c_{\max}/c_{\min})$ many times, is therefore $\cost_1(\mathcal{U})/L \cdot c_{\max} \cdot O(\log(k) \cdot L/c_{\max}) = O(\log(k)) \cdot  \cost_1(\mathcal{U})$, and the bound follows (see Lemma~\ref{lemma:cost-increase}).

\paragraph{Formal analysis of the expected cost.}
  We first  bound the number of points that are separated from their closest center by a random cut. This quantity is important as it upper bounds the number of points whose cost is increased in the final tree due to the considered cut. Recall that $L = \sum_{i=1}^d |I_i|$ denotes the total side lengths of the bounding box of the input  centers $\mathcal{U}$. We also let $f_i(\theta)$ be the number of points separated from their closest center by  the cut $(i, \theta)$. 

\begin{lemma}
    We have $\E_{(i,\theta)}[f_i(\theta)] \leq \cost_1(\mathcal{U})/L$ where the expectation is over a uniformly random threshold cut  $(i,\theta) \in \allcuts$.
    \label{lemma:l1-nr-cut}
\end{lemma}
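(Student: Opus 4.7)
The plan is a straightforward linearity-of-expectation argument. For each data point $\bm{x} \in \X$, let $\pi(\bm{x}) \in \U$ be its closest reference center. I will write $f_i(\theta) = \sum_{\bm{x} \in \X} \mathbbm{1}[\text{cut } (i,\theta) \text{ separates } \bm{x} \text{ from } \pi(\bm{x})]$, so that by linearity,
\[
\E_{(i,\theta)}[f_i(\theta)] \;=\; \sum_{\bm{x} \in \X} \Pr_{(i,\theta)}[\bm{x} \text{ is separated from } \pi(\bm{x})].
\]
It therefore suffices to upper bound each per-point separation probability by $\|\bm{x} - \pi(\bm{x})\|_1 / L$; summing then yields exactly $\cost_1(\U)/L$.

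To bound a single per-point probability, I would note that a cut $(i,\theta) \in \allcuts$ separates $\bm{x}$ from $\pi(\bm{x})$ if and only if $\theta$ lies strictly between $x_i$ and $\pi(\bm{x})_i$. Since the density of $(i,\theta)$ over $\allcuts$ is $1/L$, the probability of separation equals
\[
\sum_{i=1}^d \frac{\bigl|\,[\min(x_i,\pi(\bm{x})_i),\,\max(x_i,\pi(\bm{x})_i)] \cap I_i\,\bigr|}{L}.
\]
The intersection with $I_i$ handles the only mildly delicate point: a point $\bm{x}$ need not lie inside the bounding box of $\U$, so part of the ``separating interval'' in coordinate $i$ may fall outside $I_i$ and hence be unreachable by any sampled cut. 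However, because $\pi(\bm{x}) \in \U$ always lies inside the bounding box, we have $\pi(\bm{x})_i \in I_i$, and consequently the intersection length is trivially at most $|x_i - \pi(\bm{x})_i|$ in every coordinate. Plugging in and summing over $i$ gives $\Pr[\bm{x} \text{ separated}] \leq \|\bm{x}-\pi(\bm{x})\|_1 / L$.

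Combining the two displays and recognizing $\sum_{\bm{x} \in \X} \|\bm{x}-\pi(\bm{x})\|_1 = \cost_1(\U)$ completes the proof. There is no serious obstacle here: the only thing to be careful about is the truncation to $I_i$ for points outside the bounding box, which only helps (makes the probability smaller), so the clean bound $\cost_1(\U)/L$ holds unconditionally.
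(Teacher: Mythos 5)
Your proof is correct and follows essentially the same approach as the paper: linearity of expectation combined with the observation that restricting $\theta$ to the bounding-box interval $I_i$ can only shrink the separation probability, so the per-point bound $\|\bm{x}-\pi(\bm{x})\|_1/L$ holds. The paper's formal write-up groups the computation by dimension (rewriting $\sum_{\bm{x}}|x_i - \pi(\bm{x})_i|$ as an integral of $f_i$) rather than by point, but this is just Fubini applied in the other order; the paper's own proof sketch preceding the lemma is in fact exactly your per-point argument.
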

\begin{proof}
  For a point $\bm{x}\in \X$ let $\pi(\bm{x})$ denote the closest center in $\mathcal{U}$. Then by definition,
  \[
   \cost_1(\mathcal{U})   = \sum_{x\in \X} \|\bm{x} - \pi(\bm{x})\|_1   = \sum_{i=1}^d \sum_{\bm{x}\in X}  |{x}_i - \pi(\bm{x})_i|\,.
  \]
 Moreover, if we let $f_i(\theta)$ be the number of points separated from their closest center by  the cut $(i, \theta)$, we can rewrite the cost of a fixed dimension $i$ as follows:
 \[
     \sum_{\bm{x}\in X}  |{x}_{i} - \pi(\bm{x})_{i}| = \sum_{\bm{x}\in X}  \int_{-\infty}^\infty \mathbbm{1}[\text{$\theta$ between  ${x}_{i}$ and $\pi(\bm{x})_{i}$}] d\theta = \int_{-\infty}^\infty f_{i}(\theta) d\theta\,.
 \]
We thus have $\cost_1(\mathcal{U}) = \sum_{i=1}^d \int_{-\infty}^\infty f_{i}(\theta) d\theta$.
 
 At the same time,  if we let $[a_i, b_i]$ denote the interval $I_i$, then $\frac{1}{|I_{i}|}\int_{a_{i}}^{b_{i} }f_{i}(\theta) d \theta$ equals the number of points separated from their closest center by a uniformly random threshold cut  $(i, \theta): \theta \in I_{i}$ along dimension $i$.  Thus the expected number of points separated from their closest center by a uniformly random threshold cut in $\allcuts$ is
 \[
 \sum_{i=1}^d \frac{|I_i|}{L} \cdot \frac{1}{|I_{i}|}\int_{a_{i}}^{b_{i} }f_{i}(\theta) d \theta = \frac{1}{L} \sum_{i=1}^d  \int_{a_{i}}^{b_{i} }f_{i}(\theta) d \theta \leq\frac{1}{L} \sum_{i=1}^d  \int_{-\infty}^{\infty }f_{i}(\theta) d \theta  = \cost_1(\mathcal{U})/L\,,
 \]
 where we used $f_i(\theta) \geq 0$ for the inequality.
\end{proof}

The above lemma upper bounds the expected number of points whose cost increases from a uniformly random threshold cut. We proceed to analyze how much this increase is, in expectation. Let $\treeleaves(t)$ denote the state of $\treeleaves$ at the beginning of the $t$-th iteration of the while loop of Algorithm~\ref{alg:random} and let $c_{\max}(t) = \max_{B\in \treeleaves(t)} \max_{\bm{\mu}^i, \bm{\mu}^j \in B} \| \bm{\mu}^i - \bm{\mu}^j\|_1$ denote the maximum distance between two centers that belong to the same leaf at the beginning of the $t$-th iteration. With this notation we have that $\treeleaves(1) =  \{\mathcal{U}\}$ and that $c_{\max}(1)$ equals the $c_{\max}$ in the statement of Theorem~\ref{thm:l1-alg}. Observe that $c_{\max}(t) \geq c_{\max}(t+1)$ and $c_{\max}(t)  = 0$ if $|\treeleaves| = k$ (i.e., when each leaf contains exactly one center). Understanding the rate at which $c_{\max}(t)$ decreases is crucial for our analysis because of the following observation: Consider a leaf $B \in \treeleaves(t)$ and a point $\bm{x} \in \X$ that has not yet been separated from its closest center $\pi(\bm{x}) \in B$. If the threshold cut selected in the $t$-th iteration separates $\bm{x}$ from $\pi(\bm{x})$ then the cost of $\bm{x}$ in the final threshold tree is upper bounded by 
$\max_{\bm{\mu} \in B} \|\bm{x} - \bm{\mu}\|_1$, which, by the triangle inequality, is at most 
\begin{equation}
\max_{\bm{\mu} \in B} \|\bm{x} - \pi(\bm{x})\|_1 + \|\pi(\bm{x}) - \mu\|_1 \leq \|\bm{x} - \pi(\bm{x})\|_1  + c_{\max}(t)\,.
\label{eq:l1_cost_increase}
\end{equation}
In other words, a point that is first separated from its closest center by the threshold cut selected in the $t$-th iteration has a cost increase of at most $c_{\max}(t)$. 
\begin{lemma}
    Fix the the threshold cuts selected by Algorithm~\ref{alg:random}  during the first $t-1$ iterations (this determines the random variable $\treeleaves(t)$ and thus $c_{\max}(t)$). Let $M = 3 \ln(k) \cdot {2L}/{c_{\max}(t)}$.  Then  
    \[
        \Pr[c_{\max}(t+M) \leq c_{\max}(t)/2]  \geq 1-1/k\,,
    \]
    where the probability is over the random cuts selected in iterations $t, t+1, \ldots, t+M-1$.
    \label{lemma:l1-drop}
\end{lemma}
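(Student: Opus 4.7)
The plan is to control, pair by pair, the probability that a specific pair of centers at $\ell_1$-distance greater than $c_{\max}(t)/2$ remains inside a common leaf after $M$ sampled cuts, and then to union-bound over such pairs.

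First I would fix the tentative leaf structure $\treeleaves(t)$ (the lemma lets us condition on it) and consider any unordered pair $\{\bm{\mu}^a, \bm{\mu}^b\}$ of reference centers with $\|\bm{\mu}^a - \bm{\mu}^b\|_1 > c_{\max}(t)/2$. A uniformly random cut $(i,\theta) \in \allcuts$ separates this pair exactly when $\theta$ lies strictly between $\mu^a_i$ and $\mu^b_i$ in the sampled coordinate $i$; by the $1/L$ density on $\allcuts$ and the identity $\|\bm{\mu}^a - \bm{\mu}^b\|_1 = \sum_i |\mu^a_i - \mu^b_i|$, the per-cut separation probability equals $\|\bm{\mu}^a - \bm{\mu}^b\|_1 / L > c_{\max}(t)/(2L)$. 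Since the $M$ iterations sample i.i.d.\ cuts, the probability that none of them separates $\{\bm{\mu}^a, \bm{\mu}^b\}$ is at most
\[
\bigl(1 - c_{\max}(t)/(2L)\bigr)^M \leq \exp\!\bigl(-M\, c_{\max}(t)/(2L)\bigr) = \exp(-3\ln k) = 1/k^3,
\]
by the choice $M = 3\ln(k) \cdot 2L/c_{\max}(t)$.

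Next I would apply a union bound over the at most $\binom{k}{2} < k^2/2$ such pairs to conclude that, except on a bad event of probability at most $1/(2k) \le 1/k$, every pair of centers at $\ell_1$-distance above $c_{\max}(t)/2$ is separated by some cut sampled during iterations $t, t+1, \ldots, t+M-1$. Whenever such a cut is drawn while the pair still sits in a common leaf, the cut splits that leaf, and so Algorithm~\ref{alg:random} records it and places the two centers in different children; and once separated, a pair stays separated for all later iterations. Hence, on the good event, no leaf in $\treeleaves(t+M)$ can contain two centers at $\ell_1$-distance exceeding $c_{\max}(t)/2$, which is precisely the statement $c_{\max}(t+M) \leq c_{\max}(t)/2$.

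I do not expect any real obstacle here: the argument reduces to a one-line Chernoff-style bound on a geometric waiting time together with a union bound. The only mild subtlety worth flagging is that we do not need to track pairs that already lie in distinct leaves at time $t$, because the leaf structure is only refined by further cuts; every pair that threatens $c_{\max}(t+M) > c_{\max}(t)/2$ is necessarily a pair contained in some common leaf at time $t$, and is therefore covered by the union bound above.
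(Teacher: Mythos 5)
Your proof is correct and takes essentially the same route as the paper: compute the per-cut separation probability for a far pair as $\|\bm{\mu}^a - \bm{\mu}^b\|_1/L$, amplify over the $M$ i.i.d.\ samples to get a failure probability of at most $1/k^3$ per pair, and union-bound over the at most $\binom{k}{2}$ pairs. Your closing remark about why it suffices to consider pairs sharing a leaf at time $t$, and why once separated they remain separated, makes explicit a step the paper treats tersely, but it is the same argument.
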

\begin{proof}
    Consider two centers $\bm{\mu}^i$ and $\bm{\mu}^j$ that belong to the same leaf in $\treeleaves(t)$. The probability that a uniformly random threshold cut from $\allcuts$ separates these two centers equals $\|\bm{\mu}^i - \bm{\mu}^j\|_1/L$. Thus if the centers are at distance at least $c_{\max}(t)/2$, the probability that they are \emph{not} separated by any of $M$ independently chosen cuts is at most
    \[
    \left( 1- \frac{c_{\max}(t)/2}{L} \right)^M = \left( 1- \frac{c_{\max}(t)}{2L} \right)^{3 \ln(k) \cdot {2L}/{c_{\max}(t)}}  \leq (1/e)^{3 \ln(k)}  = 1/k^3\,.
    \]
    There are at most $\binom{k}{2}$ pairs of centers in the leaves of $\treeleaves(t)$ at distance at least $c_{\max}(t)/2$. By the union bound, we thus have, with probability at least $1-1/k$, that each of these pairs are separated by at least one of the cuts selected in iterations $t, t+1, \ldots, t+M-1$. In that case, any two centers in the same leaf of $\treeleaves(t+M)$ are at distance at most $c_{\max}(t)/2$ and so $c_{\max}(t+M) \leq c_{\max}(t)/2$.
\end{proof}

Equipped with the above lemmas we are ready to analyze the expected cost of the tree output by Algorithm~\ref{alg:random}.
  Let $(i_t, \theta_t)$ denote the cut selected by Algorithm~\ref{alg:random} in the $t$-th iteration. 
  As argued above in~\eqref{eq:l1_cost_increase}, $c_{\max}(t)$ upper bounds the cost increase of the points first separated from their closest center by the $t$-th threshold cut. Hence,
  \[
    \E\left[\cost_1(T)\right] \leq \cost_1(\mathcal{U}) + \E\left[\sum_t c_{\max}(t) f_{i_t}(\theta_t) \right],
  \]
  where the sum is over the iterations of Algorithm~\ref{alg:random} (and recall that $f_i(\theta)$ denotes the number of points separated from their closest center by the cut $(i,\theta)$). We remark that the right-hand side is an upper bound  (and not an exact formula of the cost) for two reasons:  first, not every separated point may experience a cost increase of $c_{\max}(t)$, and second, the right-hand side adds a cost increase every time a cut separates a point from its closest center and not only the first time. Nevertheless, we show that this upper bound yields the stated guarantee. We do so by analyzing the expected cost increase  of the cuts until $c_{\max}(t)$ has halved.   Specifically, let 
  \[\costinc(r) = \sum_{t\,:\, c_{\max}(t) \in (c_{\max}/2^{r+1}, c_{\max}/2^r]} c_{\max}(t) f_{i_t} (\theta_t)
  \]be the random variable that upper bounds the cost increase caused by the cuts selected during the iterations $t$ when $c_{\max}/2^{r+1} < c_{\max}(t) \leq c_{\max}/2^r$. Then \[\E[\cost_1(T)] \leq \cost_1(\mathcal{U}) + \sum_{r}  \E[\costinc(r)]\,,\] where the sum is over $r$ from $0$ to $1+ \lfloor \log_2 (c_{\max}/c_{\min})\rfloor$. 
  The bound on the expected cost therefore follows from the following lemma.
  \begin{restatable}[]{lemma}{costincrease}
      For every $r$,
      \(
    \E[\mbox{cost-increase}(r)] \leq 12\ln(k)\cdot \cost_1(\U). \)
  \label{lemma:cost-increase}
\end{restatable}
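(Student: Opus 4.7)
The plan is to partition the iterations contributing to $\costinc(r)$ into successive \emph{blocks} of controlled length, show each block contributes at most $6\ln(k)\cdot\cost_1(\U)$ in expectation, and show each block ends phase $r$ with probability at least $1-1/k$. A Wald-style sum then yields the claimed $12\ln(k)\cdot\cost_1(\U)$ bound. The key preliminary observation is that, conditionally on the history up through iteration $t-1$, both $c_{\max}(t)$ and the indicator $\mathbbm{1}[t\text{ in phase }r]$ are deterministic while $(i_t,\theta_t)$ is uniformly distributed on $\allcuts$, so Lemma~\ref{lemma:l1-nr-cut} gives the conditional bound $\E[\,c_{\max}(t)\,f_{i_t}(\theta_t)\,\mathbbm{1}[t\text{ in phase }r]\mid \text{history}\,]\le c_{\max}(t)\cdot \cost_1(\U)/L$.

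I then define a block to start at any time $t_0$ at which the algorithm is still in phase $r$ (so $c_{\max}(t_0)\in(c_{\max}/2^{r+1},\,c_{\max}/2^r]$), and to span the next $M_0 := 6\ln(k)\,L/c_{\max}(t_0)$ iterations---precisely the quantity $M$ from Lemma~\ref{lemma:l1-drop}. Since $c_{\max}(t)$ is nonincreasing, throughout the block $c_{\max}(t)\le c_{\max}(t_0)$, and summing the per-iteration bound across the block gives expected block cost at most
\[
   M_0\cdot c_{\max}(t_0)\cdot \frac{\cost_1(\U)}{L}\;=\;6\ln(k)\cdot \cost_1(\U).
\]
Lemma~\ref{lemma:l1-drop}, applied to the history at time $t_0$, further ensures that with probability at least $1-1/k$ the block drives $c_{\max}$ down to at most $c_{\max}(t_0)/2\le c_{\max}/2^{r+1}$, thereby ending phase $r$.

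Consequently the number of blocks until phase $r$ ends is stochastically dominated by a geometric random variable with success probability $\ge 1-1/k$, of mean at most $k/(k-1)\le 2$ for $k\ge 2$. Summing the per-block bound over blocks---rigorously a Wald-type identity using that $M_0$ and the per-iteration bound are measurable with respect to the history at the block's start---gives $\E[\costinc(r)]\le 2\cdot 6\ln(k)\cdot\cost_1(\U)=12\ln(k)\cdot\cost_1(\U)$. The main obstacle is precisely this last combination step: Lemma~\ref{lemma:l1-drop} only offers a high-probability (rather than in-expectation) guarantee, and blocks are adaptively defined with random lengths depending on $c_{\max}(t_0)$, so one has to condition carefully on the history at each block's start to keep the geometric tail from inflating the final constant.
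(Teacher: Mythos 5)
Your proof is correct and follows essentially the same strategy as the paper's: decompose round $r$ into ``trials'' of length $\Theta(\ln(k)\,L/c_{\max})$, bound the expected cost of each trial by $6\ln(k)\cost_1(\U)$ via Lemma~\ref{lemma:l1-nr-cut}, and use Lemma~\ref{lemma:l1-drop} to argue a geometric number of trials with mean at most $2$. The only cosmetic difference is that you allow block lengths to adapt to $c_{\max}$ at each block's start and handle the resulting dependence by conditioning on the history, whereas the paper fixes the trial length $M$ once (based on $c_{\max}$ at the first iteration of the round) and sidesteps dependence by defining a trial's ``success'' as the counterfactual event that applying only its $M$ cuts from the start of the round would halve $c_{\max}$; both are valid ways to obtain the independent-trials structure.
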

 Let $M = 3\ln(k) \cdot 2L/c_{\max}(t)$ as in Lemma~\ref{lemma:l1-drop}. Using Lemma~\ref{lemma:l1-nr-cut}, one can upper bound the expected cost of $M$ uniformly random cuts in iterations $t, t+1, \ldots, t + M-1$ by $6\ln(k) \cdot \cost_1(\U)$ and $M$ cuts is very likely to halve $c_{\max}(t)$ as in Lemma~\ref{lemma:l1-drop}.
 It is thus very likely that the cost of these $M$ cuts upper bounds $\costinc(r)$.
 The additional constant factor of $2$ in the statement of the lemma arises by considering the small ``failure'' probability of such a trial. 
 The formal proof bounding the expected cost of this geometric series can be found in
 \appref{appendix:cost-increase}.

\subsubsection{Upper bounding cost by a factor of \texorpdfstring{$O(\log^2 k)$}{O(log\string^2 k)}}
\label{sec:mod-l1-alg}
\label{sec:logk-upper-bound}

Observe that our analysis of Algorithm~\ref{alg:random}  implies that the expected cost of the output tree is at most $O(\log^2(k) \cdot \cost_1(\mathcal{U}))$ whenever $c_{\max}$ and $c_{\min}$ do not differ by more than a polynomial factor in $k$. 
However, our current techniques fail to upper bound this expectation by a factor $o(k)$ for general $c_{\max}$ and $c_{\min}$. 
To illustrate this point, consider the $k$-dimensional instance with a single point $\bm{x}$ at the origin and $k$ centers where the $i$-th center  $\bm{\mu}^i$ is located at the $i$-th standard basis vector scaled by the factor $2^i$. In our analysis, we upper bound the cost of  $\bm{x}$ with its \emph{maximum} distance to those  centers that remain in the same leaf whenever $\bm{x}$ is separated from its closest center $\bm{\mu}^1$. This yields the following upper bound on the expected cost of $\bm{x}$ in the final tree
\[
    \sum_{i=2}^k 2^i \Pr[\mbox{$\bm{x}$ is separated from $\bm{\mu}^1$ and $\bm{\mu}^i$ is the farthest remaining center}]\,.
\]
Due to the exponentially increasing distances, this is lower bounded by
\[\sum_{i=2}^k 2^{i-1} \Pr[\mbox{$\bm{x}$ is separated from $\bm{\mu}^1$ and $\bm{\mu}^i$ is a remaining center}].\]
Now note that the probability in this last sum equals 
\[
\Pr[\mbox{$\bm{x}$ is separated from $\bm{\mu}^1$ before $\bm{x}$ is separated from $\bm{\mu}^i$ }] = {2}/{(2 + 2^i)}.
\]
It follows that any analysis of Algorithm~\ref{alg:random} that simply upper bounds the reassignment cost of a point with the  maximum distance to a remaining center cannot do better than a factor of $\Omega(k)$.

We overcome this obstacle by analyzing a slight modification of Algorithm~\ref{alg:random} that avoids these problematic cuts that separate very close centers. 
Recall the notation used in the previous section: $\treeleaves(t)$ denotes the state of $\treeleaves$ at the beginning of the $t$-th iteration of the while loop and $c_{\max}(t) = \max_{B\in \treeleaves(t)} \max_{\bm{\mu}^i, \bm{\mu}^j \in B} \| \bm{\mu}^i - \bm{\mu}^j\|_1$ denotes the maximum distance between two centers that belong to the same leaf at the beginning of the $t$-th iteration. We now modify Algorithm~\ref{alg:random} by replacing Line~5 ``Sample \((i, \theta) \) uniformly at random from \(\allcuts\)'' by 
\begin{quote}
    Sample \((i, \theta) \) uniformly at random from those cuts in \(\allcuts\) that do \emph{not} separate two centers that are within distance at most $c_{\max}(t) /k^4$.
\end{quote}
This modification allows us to prove a nearly tight guarantee on the expected cost.
\begin{theorem}
    Given reference centers $\mathcal{U} = \{\bm{\mu}^1, \bm{\mu}^2, \ldots, \bm{\mu}^k\}$, \emph{modified} Algorithm~\ref{alg:random} outputs a threshold tree $T$ whose expected cost satisfies
    \(
      \E[\cost_1(T)] \leq O(\log^2 k ) \cdot \cost_1(\mathcal{U}).
    \)
    \label{thm:l1-modified-alg}
\end{theorem}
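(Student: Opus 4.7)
The plan is to adapt the proof of Theorem~\ref{thm:l1-alg}, replacing the problematic $\log(c_{\max}/c_{\min})$ factor in the sum over epochs by $O(\log k)$ using the structural constraint imposed by the modification.

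First, I would show that the modified sampling distribution is essentially uniform on $\allcuts$. At time $t$, the forbidden cuts are those separating at least one of the at most $\binom{k}{2}$ pairs of centers within distance $c_{\max}(t)/k^4$, so the total measure of forbidden cuts is at most $\binom{k}{2} \cdot c_{\max}(t)/k^4 \leq L/(2k^2)$, using that any pairwise $\ell_1$ distance between centers is bounded by the total side length $L$ of the bounding box. Thus the modified sampling density differs from the uniform density by a factor of $1 \pm O(1/k^2)$, and Lemma~\ref{lemma:l1-nr-cut} and Lemma~\ref{lemma:l1-drop} carry over (up to constants) to the modified sampling.

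Second, I would refine the epoch-by-epoch cost analysis. The key observation: whenever a valid cut at time $t$ separates a point $\bm{x}$ (with reference center $\bm{\mu}^1$) from $\bm{\mu}^1$, every center $\bm{\mu}^j$ ending up on $\bm{x}$'s side of the cut must satisfy $\|\bm{\mu}^1 - \bm{\mu}^j\|_1 > c_{\max}(t)/k^4$, since that cut also separates $(\bm{\mu}^1, \bm{\mu}^j)$ and is valid. At the same time, $\bm{\mu}^1$ and $\bm{\mu}^j$ can only share a leaf while $c_{\max}(t) \geq \|\bm{\mu}^1 - \bm{\mu}^j\|_1$. Hence, for each pair $(\bm{\mu}^1, \bm{\mu}^j)$, only $O(\log_2 k^4) = O(\log k)$ epochs $r$ can simultaneously admit the two centers being together in a leaf and valid-cut-separable. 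Summing the per-epoch bound of $O(\log k) \cdot \cost_1(\U)$ over just these $O(\log k)$ relevant epochs yields the desired $O(\log^2 k) \cdot \cost_1(\U)$ total, independently of the aspect ratio $c_{\max}/c_{\min}$.

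The main obstacle is making this charging rigorous. The cost upper bound used in the proof of Theorem~\ref{thm:l1-alg} accumulates $c_{\max}(t) f_{i_t}(\theta_t)$ at \emph{every} cut (not just those that first separate a point from its closest center), and different points and pairs of centers interact through the shared random cuts. Turning the ``$O(\log k)$ epochs per pair'' observation into a global cost bound requires identifying, for each point, a short list of center pairs that ``witness'' its reassignment, and carefully controlling the conditional probabilities of separations across the algorithm's execution. This bookkeeping is the technical core of the proof, and is why the formal argument is deferred to the appendix.
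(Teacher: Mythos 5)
Your high-level plan is aligned with the paper's proof, and you correctly identify the key structural observation: a valid cut that separates a point $\bm{x}$ from its center $\pi(\bm{x})$ at time $t$ must also separate some pair of centers at distance lying in the window $(c_{\max}(t)/k^4,\, c_{\max}(t)]$, which constrains the relevant epochs to $O(\log k)$ per cut. But you acknowledge — correctly — that you have not turned this into a rigorous bound, and the sentence ``Summing the per-epoch bound of $O(\log k)\cdot\cost_1(\U)$ over just these $O(\log k)$ relevant epochs'' does not work as stated: the set of relevant epochs differs per pair (and per cut), so you cannot simply truncate the outer sum over $r$. The naive per-epoch bound inherited from Theorem~\ref{thm:l1-alg} aggregates over \emph{all} cuts, many of which are not relevant for a given~$r$, and it already gives $O(\log k)\cdot\cost_1(\U)$ for each of potentially unboundedly many epochs.

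What the paper does instead, and what is missing from your sketch, is two interlocking refinements. First, the cost upper bound is tightened by inserting the indicator $\mathbbm{1}[(i_t,\theta_t)\text{ was added to the tree}]$ into the sum $\sum_t c_{\max}(t) f_{i_t}(\theta_t)$; without this restriction one cannot later tie a cut's contribution to the current value of $c_{\max}(\cdot)$. Second, rather than bounding each epoch's cost by $O(\log k)\cdot\cost_1(\U)$ and trying to truncate epochs, the paper bounds $\E[\costinc'(r)]$ by $24\ln(k)\cdot\sum_i\int f_i(\theta)\,\activecut_r(i,\theta)\,d\theta$, where $\activecut_r(i,\theta)$ is the indicator that the cut $(i,\theta)$ separates some pair within distance $c_{\max}/2^r$ but none within $c_{\max}/(2^{r+1}k^4)$. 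Now the sum over $r$ is unbounded, but for each fixed cut $(i,\theta)$ the number of $r$ with $\activecut_r(i,\theta)=1$ is $O(\log k)$ — this is exactly your observation, repackaged per cut instead of per center pair. Swapping the order of summation then gives $\sum_r \E[\costinc'(r)] \le O(\log^2 k)\cdot\sum_i\int f_i(\theta)\,d\theta = O(\log^2 k)\cdot\cost_1(\U)$. One small factual point: this argument is given in full in Section~\ref{sec:mod-l1-alg} of the paper, not the appendix; only the proof of Lemma~\ref{lemma:cost-increase}, which is for the original algorithm, is deferred. Your per-point ``list of witnessing center pairs'' idea is plausible but strictly more complicated than the per-cut charging the paper uses, and it is the part you have not worked out, so the proposal has a genuine gap.
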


The above theorem implies the furthermore statement of Theorem~\ref{thm:l1-alg}. This follows by observing that  Algorithm~\ref{alg:random} selects the same cuts as the modified version with probability at least $1-1/k$. This error probability can be made smaller by not allowing cuts that separate centers within distance $c_{\max}(t)/k^\ell$ for an $\ell$ larger than $4$.
We give a more formal explanation of this implication in Appendix~\ref{sec:thm2-implies-thm1}.

The proof of Theorem~\ref{thm:l1-modified-alg} is similar to the cost analysis in Section~\ref{sec:l1-cost-analysis} with the main difference being that here we are more careful in bounding the cost when considering different ``rounds'' of the algorithm. In the analysis it will be convenient to take the following viewpoint of the modified algorithm: it samples a uniformly random cut and then discards it if it separates two centers within distance $c_{\max}(t)/k^4$. While the number of iterations may increase with this viewpoint, the output distribution is the same as the modified algorithm in that, in each iteration, a cut is sampled uniformly at random among those that do not separate any centers within distance $c_{\max}(t)/k^4$.  In the following, we refer to this as the \emph{sample-discard algorithm} and we prove Theorem~\ref{thm:l1-modified-alg} by showing that the sample-discard algorithm outputs a tree whose expected cost is $O(\log^2 k) \cdot \cost_1(\mathcal{U})$.

Let $(i_t, \theta_t)$ denote the (uniformly random) cut selected in the $t$-th iteration of the sample-discard algorithm and recall the following notation: $\treeleaves(t)$ denotes the state of $\treeleaves$ at the beginning of the  $t$-th iteration of the while-loop and $c_{\max}(t) = \max_{B\in \treeleaves(t)} \max_{\bm{\mu}^i, \bm{\mu}^j \in B} \| \bm{\mu}^i - \bm{\mu}^j\|_1$ denotes the maximum distance between two centers that belong to the same leaf at the beginning of the $t$-th iteration. 
We start by observing that  Lemma~\ref{lemma:l1-drop} readily generalizes to the modified version.
\begin{lemma}
Fix the the threshold cuts selected by the sample-discard algorithm  during the first $t-1$ iterations (this determines the random variable $\treeleaves(t)$ and thus $c_{\max}(t)$). Let $M = 3 \ln(k) \cdot {4L}/{c_{\max}(t)}$.  Then  
\[
    \Pr[c_{\max}(t+M) \leq c_{\max}(t)/2]  \geq 1-1/k\,,
\]
where the probability is over the random cuts selected in iterations $t, t+1, \ldots, t+M-1$.
\label{lemma:logk-l1-drop}
\end{lemma}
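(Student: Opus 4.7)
The plan is to mirror the argument of Lemma~\ref{lemma:l1-drop} in the sample-discard viewpoint, where each iteration draws a cut uniformly at random from $\allcuts$ and either applies it or rejects it. The doubling from $M = 3\ln(k) \cdot 2L/c_{\max}(t)$ in Lemma~\ref{lemma:l1-drop} to $M = 3\ln(k) \cdot 4L/c_{\max}(t)$ here is designed precisely to absorb the probability mass consumed by the rejection step.

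First I would bound the discard probability in each iteration $s \in \{t, \ldots, t+M-1\}$. A cut is discarded only if it separates some pair of centers sharing a leaf of $\treeleaves(s)$ at distance at most $c_{\max}(s)/k^4 \leq c_{\max}(t)/k^4$ (using that $c_{\max}$ is monotonically non-increasing in $s$). Since a uniformly random cut separates any fixed pair at distance $d$ with probability $d/L$, a union bound over at most $\binom{k}{2}$ such close pairs yields a per-iteration discard probability of at most $c_{\max}(t)/(2 k^2 L)$, and this bound holds no matter what $\treeleaves(s)$ actually looks like.

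Next, for each pair $(\bm{\mu}^i, \bm{\mu}^j)$ sharing a leaf of $\treeleaves(t)$ with $\|\bm{\mu}^i - \bm{\mu}^j\|_1 \geq c_{\max}(t)/2$, I would lower bound the probability that iteration $s$ both separates this pair and is not discarded. As in Lemma~\ref{lemma:l1-drop}, a uniformly random cut separates the pair with probability at least $c_{\max}(t)/(2L)$, so subtracting the discard probability gives
\[
\Pr[\text{iteration $s$ separates the pair and is not discarded}\mid\text{history}] \;\geq\; \frac{c_{\max}(t)}{2L} - \frac{c_{\max}(t)}{2 k^2 L} \;\geq\; \frac{c_{\max}(t)}{4L}
\]
for $k \geq 2$, with the right-hand side independent of the history. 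Plugging this into the same geometric tail bound as in Lemma~\ref{lemma:l1-drop} yields a per-pair failure probability of at most $(1 - c_{\max}(t)/(4L))^M \leq e^{-3\ln k} = 1/k^3$, and a union bound over the at most $\binom{k}{2}$ ``far'' pairs gives a total failure probability of at most $1/k$; on the complementary event, every pair in a common leaf of $\treeleaves(t+M)$ is at distance at most $c_{\max}(t)/2$, so $c_{\max}(t+M) \leq c_{\max}(t)/2$ as claimed.

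The only real subtlety---though not quite an obstacle---is that, unlike in Lemma~\ref{lemma:l1-drop}, the discard events depend on the evolving leaf structure and hence on the history, so the $M$ iterations are not literally independent. I would handle this by observing that the per-iteration conditional lower bound of $c_{\max}(t)/(4L)$ holds uniformly over all possible histories, which is exactly what the standard ``$(1-p)^M$'' tail inequality requires.
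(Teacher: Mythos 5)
Your proof is correct and follows essentially the same route as the paper: both lower bound the per-iteration probability that a given far pair is separated by a cut that is not discarded, by subtracting the union-bounded probability (at most $\binom{k}{2} c_{\max}(t)/(k^4 L)$) of hitting a ``forbidden'' separation from the separation probability $c_{\max}(t)/(2L)$, arrive at $c_{\max}(t)/(4L)$, and then reuse the $(1-p)^M$ and union-bound argument of Lemma~\ref{lemma:l1-drop}. Your explicit remark that the conditional per-iteration bound holds uniformly over histories (via monotonicity of $c_{\max}$, so the forbidden set only shrinks) is exactly the point the paper also relies on.
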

\begin{proof}
    The proof is similar to that of Lemma~\ref{lemma:l1-drop} but some care has to be taken  as certain cuts are now discarded.
    
    Consider two centers $\bm{\mu}^i$ and $\bm{\mu}^j$ that belong to the same leaf in $\treeleaves(t)$. Further suppose that $c_{\max}(t)/2 \leq \| \bm{\mu}^p - \bm{\mu}^q \|_1 \leq c_{\max}(t)$. We have that  any cut $(i,\theta)$ that separates these two centers is considered  (i.e., not discarded) by the sample-discard algorithm after the $t$-th iteration unless $(i,\theta)$ also separates two centers within distance $c_{\max}(t)/k^4$. Here we used that $c_{\max}(\cdot)$ is monotonically decreasing and so the set of cuts that are discarded if sampled is only decreasing in later iterations.      We can thus obtain the lower bound $c_{\max}(t)/(4L)$ on  the probability that a uniformly random cut  separates $\bm{\mu}^p$ and $\bm{\mu}^q$ by subtracting 
   \[
   \frac{1}{L}\sum_{i=1}^d \int_{-\infty}^{\infty} \mathbbm{1}\left[\mbox{$\theta$ separates two centers within distance } {c_{\max}(t)}/{k^4}\right]\, d\theta 
   \leq\frac{1}{L} \binom{k}{2} \frac{c_{\max}(t)}{{k^4}}
   \]
   from 
   \[
    \frac{1}{L}\sum_{i=1}^d \int_{-\infty}^{\infty} \mathbbm{1}[\mbox{$\theta$ between $\bm{\mu}_i^p$ and $\bm{\mu}_i^q$}]\, d\theta \geq \frac{1}{L} c_{\max}(t)/2\,.
   \]
    The proof now proceeds in the exact same way as that of Lemma~\ref{lemma:l1-drop}.
    Indeed, if the centers are at distance at least $c_{\max}(t)/2$, the probability that they are \emph{not} separated by any of $M$ independently chosen cuts is at most
    \[
    \left( 1- \frac{c_{\max}(t)}{4L} \right)^M = \left( 1- \frac{c_{\max}(t)}{4L} \right)^{3 \ln(k) \cdot {4L}/{c_{\max}(t)}}  \leq (1/e)^{3 \ln(k)}  = 1/k^3\,.
    \]
    There are at most $\binom{k}{2}$ pairs of centers in the leaves of $\treeleaves(t)$ at distance at least $c_{\max}(t)/2$. By the union bound, we thus have, with probability at least $1-1/k$, that each of these pairs are separated by at least one of the cuts selected in iterations $t, t+1, \ldots, t+M-1$. In that case, any two centers in the same leaf of $\treeleaves(t+M)$ are at distance at most $c_{\max}(t)/2$ and so $c_{\max}(t+M) \leq c_{\max}(t)/2$.
\end{proof}

Now, similar to Section~\ref{sec:l1-cost-analysis}, we can upper bound the expected cost of the constructed threshold tree $T$ by 
\[
    \E\left[\cost_1(T)\right] \leq \cost_1(\mathcal{U}) + \E\left[\sum_t c_{\max}(t) f_{i_t}(\theta_t) \mathbbm{1}[\mbox{$(i_t,\theta_t)$ was added to the tree}] \right]\,,
  \]
  where the sum is over the iterations. We remark that, in contrast to Section~\ref{sec:l1-cost-analysis}, we have strengthened the upper bound by only considering those cuts that were actually added to the threshold tree by the modified algorithm. This refinement is necessary for obtaining the improved guarantee. We now analyze the sum in the expectation by partitioning it into $1+ \lfloor \log_2 (c_{\max}/c_{\min})\rfloor$ rounds. Specifically for $r\in \{0,1 \ldots, \lfloor \log_2(c_{\max}/c_{\min})\rfloor \}$, we let
  \[
    \costinc'(r) = \sum_{t: c_{\max}(t) \in (c_{\max}/2^{r+1}, c_{\max}/2^r]} c_{\max}(t) f_{i_t} (\theta_t) \mathbbm{1}[\mbox{$(i_t, \theta_t)$ was added to the tree}] 
  \] be the cost of the cuts selected during the iterations $t$ when $c_{\max}/2^{r+1} < c_{\max}(t) \leq c_{\max}/2^r$.
  
  To upper bound $\E[\costinc'(r)]$ we use $\activecut_r(i,\theta)\in \{0,1\}$ to denote the indicator variable of those cuts that separate two centers within distance $c_{\max}/2^r$ and do not separate any two centers within distance $c_{\max}/(2^{r+1}k^4)$.
\begin{lemma}
    For a round $r$, 
    \[
        \E[\costinc'(r)] \leq 
    24 \ln(k) \cdot   \sum_{i=1}^d \int_{-\infty}^\infty f_i(\theta) \, \activecut_r(i,\theta) \, d\theta\,.
    \]
    \label{lemma:log-l1-drop}
\end{lemma}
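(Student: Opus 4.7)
The plan is to mimic the cost analysis of Section~\ref{sec:l1-cost-analysis} under the sample-discard viewpoint, but restricted to a single round $r$ and exploiting the $\activecut_r$ indicator to tighten the bound. Let $T_r$ denote the random set of iterations $t$ with $c_{\max}(t) \in (c_{\max}/2^{r+1}, c_{\max}/2^r]$, so that $\costinc'(r) = \sum_{t \in T_r} c_{\max}(t)\, f_{i_t}(\theta_t)\, \mathbbm{1}[\text{$(i_t,\theta_t)$ added}]$.

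The first structural step I would establish is that every cut actually added during round $r$ automatically satisfies $\activecut_r(i_t,\theta_t)=1$. Indeed, an added cut must split some leaf $B \in \treeleaves(t)$, so it separates two centers of $B$ at distance at most $c_{\max}(t) \le c_{\max}/2^r$, giving condition~(a); and since the sample-discard algorithm did not discard it, it separates no two centers within distance $c_{\max}(t)/k^4 \ge c_{\max}/(2^{r+1}k^4)$, giving condition~(b). Combined with $c_{\max}(t) \le c_{\max}/2^r$ for $t \in T_r$, this yields the deterministic bound
\[
    \costinc'(r) \;\le\; \frac{c_{\max}}{2^r} \sum_{t \in T_r} f_{i_t}(\theta_t)\, \activecut_r(i_t, \theta_t).
\]

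I would then take expectations, using that in the sample-discard viewpoint each $(i_t, \theta_t)$ is an independent uniform sample from $\allcuts$, while $\mathbbm{1}[t \in T_r]$ is measurable with respect to the previous $t-1$ cuts. Exactly as in the proof of Lemma~\ref{lemma:l1-nr-cut}, for any fixed past,
\[
    \E\!\left[f_{i_t}(\theta_t)\, \activecut_r(i_t, \theta_t)\right] \;=\; \frac{1}{L}\sum_{i=1}^d \int_{-\infty}^\infty f_i(\theta)\, \activecut_r(i,\theta)\, d\theta,
\]
and a tower/Wald-style identity summing the conditional expectations over $t$ gives
\[
    \E\!\left[\sum_{t\in T_r} f_{i_t}(\theta_t)\, \activecut_r(i_t,\theta_t)\right] \;=\; \E[|T_r|] \cdot \frac{1}{L}\sum_{i=1}^d \int_{-\infty}^\infty f_i(\theta)\, \activecut_r(i,\theta)\, d\theta.
\]

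Finally I would bound $\E[|T_r|]$ by invoking Lemma~\ref{lemma:logk-l1-drop}: while we remain in round $r$ we have $c_{\max}(\cdot) > c_{\max}/2^{r+1}$, so each block of $M = 12\ln(k)\,L/c_{\max}(\cdot) < 24\ln(k)\,L\cdot 2^r/c_{\max}$ consecutive iterations halves $c_{\max}$ with probability at least $1-1/k$; iterating this geometric tail bound yields $\E[|T_r|] = O(\ln(k)\,L\cdot 2^r/c_{\max})$. Plugging into the two displays above, the factors $L$, $2^r$, and $c_{\max}$ cancel and leave $O(\ln k) \sum_i \int f_i(\theta)\,\activecut_r(i,\theta)\, d\theta$, matching the stated $24\ln(k)$ up to the tightness of the geometric-series constant. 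The main obstacle will be the clean bookkeeping around the random stopping time $|T_r|$: verifying the measurability that legitimizes the Wald-style summation, and accepting that discarded iterations inside $T_r$ contribute nothing to $\costinc'(r)$ yet still count toward $|T_r|$ -- which is precisely what the halving lemma absorbs for us.
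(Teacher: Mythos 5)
Your proof is correct and follows essentially the same route as the paper's: the same structural observation that any cut actually added during round $r$ must satisfy $\activecut_r(i,\theta)=1$ (it separates two centers at distance $\le c_{\max}(t)\le c_{\max}/2^r$, and, having survived the discard step, none within $c_{\max}(t)/k^4\ge c_{\max}/(2^{r+1}k^4)$); the same per-cut expectation computation mirroring Lemma~\ref{lemma:l1-nr-cut}; and the same reliance on Lemma~\ref{lemma:logk-l1-drop} with a geometric-tail argument to control the number of round-$r$ iterations. The only difference is organizational: you factor the bound as (per-cut expectation) $\times\, \E[|T_r|]$ via a tower/Wald identity and then bound $\E[|T_r|]$, whereas the paper bounds the total cost of trials of $M=12\ln(k)\,L/c_{\max}(t)$ cuts directly, keeping $c_{\max}(t)$ (with $t$ the first iteration of the round) on both sides so that $M\cdot c_{\max}(t)/L$ cancels exactly. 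Your version pays for using the loose endpoints $c_{\max}/2^r$ and $c_{\max}/2^{r+1}$ in place of $c_{\max}(t)$, which costs roughly a factor of $2$ in the constant, as you already flagged; both yield $O(\log k)$, which is all that is used downstream.
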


Before giving the proof of this lemma, let us see how it implies Theorem~\ref{thm:l1-modified-alg}. For this, note that a cut $(i,\theta)$ only has $\activecut_r(i,\theta) = 1$ for at most $O(\log(k^4))$ many values of $r$. Indeed, let $c$ be the distance between the closest centers that $(i,\theta)$ separates. Then any round $r$ for which $\activecut_r(i,\theta) = 1$ must satisfy $c_{max}/(2^{r+1} k^4) \leq c \leq c_{max}/2^r$. Hence, we have 
\begin{align*}
    \cost_1(T) &\leq \cost_1(\mathcal{U}) + \sum_r \E[\costinc'(r)]  \\
    & \leq \cost_1(\mathcal{U}) + \sum_r  24 \ln(k) \cdot   \sum_{i=1}^d \int_{-\infty}^\infty f_i(\theta) \, \activecut_r(i,\theta) \, d\theta \\
    & \leq \cost_1(\mathcal{U}) + O(\log^2 k) \cdot   \sum_{i=1}^d \int_{-\infty}^\infty f_i(\theta)   \, d\theta  \\
    & = O(\log^2 k) \cdot \cost_1(\mathcal{U})\,.
\end{align*}
In other words, we proved that the sample-discard algorithm outputs a tree $T$ with $\E[\cost_1(T)] \leq O(\log^2 k) \cdot \cost_1(\mathcal{U})$, which implies Theorem~\ref{thm:l1-modified-alg} since modifed Algorithm~\ref{alg:random} and the sample-discard algorithm have the same output distribution.
It remains to prove the lemma.
\begin{proof}[Proof of Lemma~\ref{lemma:log-l1-drop}]
    Consider the first iteration $t$ such that $c_{\max}(t) \leq c_{\max}/2^r$. Further suppose that $c_{\max}(t) > c_{\max}/2^{r+1}$ since otherwise $\costinc'(r) = 0$ and the statement is trivial.  We proceed to upper bound $\E[\costinc'(r)]$ as follows. First note that the cost of a random cut sampled in an iteration $t'$ such that $c_{\max}/2^{r+1} \leq c_{\max}(t') \leq c_{\max}(t)$ equals
    \[
        \frac{c_{\max}(t')}{L}\sum_{i=1}^d \int_{-\infty}^\infty f_i(\theta)  \mathbbm{1}[\mbox{$(i,\theta)$ was added to the tree}] \, d\theta\,.
    \]
    The cut $(i,\theta)$ can be added to the tree only if it does not separate any centers within distance $c_{\max}(t')/k^4 \geq c_{\max}/(2^{r+1} k^4)$ and it must separate two centers within distance at most $c_{\max}(t') \leq c_{\max}/2^r$. In other words, any cut that is added to the tree must have $\activecut_r(i,\theta) = 1$. We can thus upper bound the above cost of a single cut by
    \begin{equation}
        \frac{c_{\max}(t)}{L}\sum_{i=1}^d \int_{-\infty}^\infty f_i(\theta) \, \activecut_r(i,\theta) \, d\theta\,,
        \label{eq:logk-one-cut}
    \end{equation}
    where we also used that $c_{\max}(t') \leq c_{\max}(t)$.

   The proof now follows arguments that are again similar to those in Section~\ref{sec:l1-cost-analysis}. Select $M = 12\ln(k)\cdot L/c_{\max}(t)$ as in Lemma~\ref{lemma:logk-l1-drop}. We upper bound $\E[\costinc(r)]$ by adding ``trials'' of $M$ cuts until $c_{\max}(\cdot)$ goes below $c_{\max}/2^{r+1}$. (Strictly speaking this may not happen after a multiple of $M$ cuts but considering more cuts may only increase the cost of our upper bound.)  Let $H$ be the event that the following $M$ cuts causes $c_{\max}(\cdot)$ to drop below $c_{\max}/2^{r+1}$. By Lemma~\ref{lemma:logk-l1-drop}, $\Pr[H] \geq 1-1/k$. Furthemore, the expected cost of $M$ cuts is $M$ times~\eqref{eq:logk-one-cut} which equals
   \[
    12\ln(k) \cdot   \sum_{i=1}^d \int_{-\infty}^\infty f_i(\theta) \,  \activecut_r(i,\theta) \, d\theta\,.
   \]
   
   The statement now follows from the same ``geometric distribution'' calculations as in the proof of Lemma~\ref{lemma:cost-increase}. 
\end{proof}

\subsection{Implementation details}
\label{sec:implementation}

Since we only use cuts that split at least one leaf, the algorithm in fact only needs to sample cuts conditioned on this event. 
Note that if we sample only the cuts that split at least one leaf, the while loop in \cref{line:while} of~\cref{alg:random} runs for at most \(k - 1 \) iterations.
We now explain how to efficiently sample cuts (Line~\ref{line:sample}), find the leaves split by a given cut (Line~\ref{line:for}), and implement the split operation (Lines~\ref{line:split}--\ref{line:update_leaves}).

We first show how to efficiently implement the split operation.
For a cut \( (i, \theta) \) that splits a given leaf \( B \) into \(B^-\) and \(B^+\), the split operation can be implemented in \(O(d \cdot \min(|B^-|,|B^+|) \cdot \log |B|) \) time as follows: 
In each leaf \( B \), we maintain \(d\) binary search trees \(T^B_1, \dots, T^B_d\) where \(T^B_i\) stores the \(i\)-th coordinate of the centers in \(B\). 
Now, given a cut \((i, \theta)\) and a leaf \( B \) that gets split by \( (i, \theta) \), we can find the number of centers in \(B\) that have a smaller or equal \(i\)-th coordinate than \(\theta\) using \(T^B_i\) in \(O(\log |B|) \) time. 
Let \(b^- \) be this number, and let \(b^+ = |B|-b^- \). 
Suppose that \( b^- \leq b^+ \). For the other case, the implementation is analogous. 
In this case, we construct \(B^-\) by initializing it with \(d\) empty binary search trees and inserting the centers whose \(i\)-th coordinate is at most \(\theta\) to each of them. This takes \(O(d \cdot b^- \cdot \log |B|) \) time. For \(B^+\), we just reuse the binary search trees of \(B\) after removing the centers that belong to \(B^-\). This also takes \(O(d \cdot b^- \cdot \log |B|) \) time.
Let \(\tau(k)\) denote the running time of all splitting operations performed by the algorithm when starting with a single leaf with \(k\) leaves. Then, \(\tau(k) = \tau(k-k') + \tau(k') + O(d \cdot \min(k-k',k') \cdot \log k) \) and by induction, we conclude that \(\tau(k) = O(d k \log^2 k) \).

To find the leaves that get separated by a cut, we employ the following data structure.
For each dimension \(i\), we maintain a balanced interval tree \(T^{\operatorname{int}}_i\). 
For each tentative leaf node with centers \(B\), we store the interval indicating the range of the \(i\)-th coordinate of \(B\) in \(T^{\operatorname{int}}_i\).
We update the corresponding interval trees after each split operation, which amounts to removing at most one interval and adding at most two intervals per node that gets split.
Note that the added and removed intervals for a single split operation for a fixed dimension can be computed in \(O(\log k)\) time using the previously described node binary search trees.
Moreover, adding and removing intervals to and from an interval tree with at most \(k\) intervals also takes \(O(\log k)\) time.
As we have  \(O(k)\) split operations in total, the time to maintain the interval trees is \(O(d k \log k)\).
Now, given a cut \((i,\theta)\), we can retrieve all the leaves that get separated by \((i,\theta)\) in \(O(\log (k_1) + k_2) \) time where \(k_1\) is the number of tentative leaves and \(k_2\) is the number of tentative leaves that get separated by the cut. 
To retrieve such leaves, we query the \(i\)-th interval tree to find all intervals that contain the value \(\theta\). 
Since we sample at most \(k-1\) cuts from the conditioned distribution, the total time for this operation over all cuts and all dimensions is \(O(d k \log k) \).

What remains is to show that we can efficiently sample a uniform cut conditioned on the event that it splits at least one leaf. To this end, in the interval trees described above, we also maintain the lengths of the unions of intervals in each subtree. This length information can be updated in \(O(\log k')\) time where \(k' \leq k \) is the number of intervals in an interval tree. 
Then in \(O(d)\) time, one can sample a dimension \(i\) and in \(O(\log k')\) time, sample a suitable \(\theta\) value.

\section{Explainable \texorpdfstring{$k$}{k}-means and general \texorpdfstring{$\ell_p$}{lp}-norm clustering}
\label{ell-p}

\newcommand{\closecenter}{\operatorname{\pi}}
\newcommand{\propdist}{\mathcal{D}_p}
\newcommand{\intervals}{\mathcal{I}}
\newcommand{\seg}{\operatorname{I}}

In this section, we generalize \cref{thm:l1-modified-alg} to the explainable \(k\)-clustering problems with assignment cost defined in terms of the \(\ell_p\)-norm, which includes the explainable \(k\)-means (\(p = 2\)) problem.

Recall that in \cref{ell-1}, we sample cuts from the uniform distribution over \(\allcuts\), and consequently, the probability that a point \(\bm{x} \in \mathcal{X} \) is separated from its closest center \(\closecenter(\bm{x})\) is proportional to the \(\ell_1\) distance between \(\bm x\) and \(\closecenter(\bm{x})\).
However, selecting cuts according to the uniform distribution can be arbitrarily bad for higher \(p\)-norms even in one-dimensional space.
For example, consider the \(k\)-means (i.e. \(p = 2\)) problem with \(d = 1\) where the cost of assigning a point \(\bm{x}\) to a center \(\bm{\mu}\) is defined as \( \|\bm{x} -\bm{\mu}\|^2_2 \). 
Suppose we have two centers \(\bm{\mu}^1 = -1\) and \(\bm{\mu}^2 = D > 1\), and fix one data point \(\bm x = \bm{0}\). 
The closest center to \(\bm x\) is \(\bm{\mu}^1\) and hence the original cost is \(1\). However, the expected cost of a uniformly random cut is \(((D\cdot 1^2 + 1 \cdot D^2)/(1+D)  = D \) which can be arbitrarily large. 

To avoid such drastic costs, we sample cuts from a generalized distribution.
Ideally, we would like to sample cuts analogously to the case of \(k\)-medians so that the probability that we separate a point \(\bm x\) from its closest center \(\closecenter(\bm x)\) is proportional to \( \|\bm x - \closecenter(\bm x) \|^p_p \).
However, sampling from such a distribution seems very complicated if at all possible.
Instead, we sample from a slightly different distribution:
Namely, for a \(p\)-norm where the cost of assigning a point \(x\) to a center \(y\) is \( \|x-y\|^{p}_{p}\), we sample cuts \((i,\theta)\) from the distribution where the probability density function of \( (i,\theta) \) is proportional to \( \min_{j \in [k]} |\mu^j_i - \theta|^{p-1} \), the \((p-1)\)-th power of the minimum distance to a center \emph{along the \(i\)-th dimension}. We call this distribution \(\propdist \).

Using samples from \(\propdist\) with a modified version of~\cref{alg:random} yields~\cref{thm:lp-modified-alg}.
\begin{restatable}[]{theorem}{lpmodifidalgthm}
  For every \(p\geq1\), there exists a randomized algorithm that when given input centers $\mathcal{U} = \{\bm{\mu}^1, \bm{\mu}^2, \ldots, \bm{\mu}^k\}$, outputs a threshold tree $T$ whose expected cost satisfies
    \[
      \E[\cost_p(T)] \leq O(k^{p-1} \log^2 k) \cdot \cost_p(\mathcal{U}).
    \]
    \label{thm:lp-modified-alg}
\end{restatable}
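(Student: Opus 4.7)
The plan is to mirror the round-based analysis of Theorem~\ref{thm:l1-modified-alg}, replacing uniform sampling with $\propdist$ and replacing $L$ with $L_p := \sum_{i=1}^d \int_{I_i} \min_{j \in [k]} |\mu^j_i - \theta|^{p-1}\, d\theta$. Throughout, I will let $c_{\max}(t) := \max_{B \in \treeleaves(t)}\max_{\bm{\mu}^a, \bm{\mu}^b \in B}\|\bm{\mu}^a - \bm{\mu}^b\|_p$ denote the maximum $\ell_p$-diameter of a leaf at iteration $t$. As in Section~\ref{sec:mod-l1-alg}, I would analyze a modified algorithm that discards any cut which separates two centers at $\ell_p$-distance smaller than $c_{\max}(t)/k^{O(1)}$, noting that its output coincides with plain $\propdist$-sampling with probability at least $1 - 1/k$.

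First, I would prove the analog of Lemma~\ref{lemma:l1-nr-cut}. The key identity
\[
  |x_i - \pi(\bm{x})_i|^p \;=\; \int_{\min(x_i,\pi(\bm{x})_i)}^{\max(x_i,\pi(\bm{x})_i)} p\,|\pi(\bm{x})_i - \theta|^{p-1}\, d\theta,
\]
combined with the trivial bound $\min_{j}|\mu^j_i - \theta|^{p-1} \le |\pi(\bm{x})_i - \theta|^{p-1}$, implies after summing over dimensions and points that $\E_{(i,\theta)\sim\propdist}[f_i(\theta)] \le \cost_p(\mathcal{U})/(p\,L_p)$. For the cost increase of separating a single point $\bm{x}$, the triangle inequality $\|\bm{x}-\bm{\mu}\|_p \le \|\bm{x}-\pi(\bm{x})\|_p + c_{\max}(t)$ and the convexity bound $(a+b)^p \le 2^{p-1}(a^p+b^p)$ give an upper bound of $2^{p-1}(\|\bm{x}-\pi(\bm{x})\|_p^p + c_{\max}(t)^p)$.

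The main step, and the place I expect the main obstacle, is the analog of Lemma~\ref{lemma:l1-drop}: showing that, with high probability, $M_p = O(\log(k)\cdot k^{p-1} L_p / c_{\max}(t)^p)$ cuts suffice to halve $c_{\max}(t)$. This needs a lower bound saying that a random $\propdist$-cut separates any two fixed centers $\bm{\mu}^a, \bm{\mu}^b$ with $\|\bm{\mu}^a-\bm{\mu}^b\|_p \ge c_{\max}(t)/2$ with probability $\Omega(c_{\max}(t)^p/(k^{p-1}L_p))$. The discard rule is essential here: on each ``separating slab'' $\theta \in [\mu^a_i, \mu^b_i]$, the kept cuts satisfy $\min_j|\mu^j_i - \theta| \ge \Omega(c_{\max}(t)/k)$, so the density is at least $\Omega((c_{\max}(t)/k)^{p-1}/L_p)$; integrating over each slab of length $|\mu^a_i - \mu^b_i|$, summing over $i$, and using $\sum_i |\mu^a_i - \mu^b_i| \ge \|\bm{\mu}^a - \bm{\mu}^b\|_p \ge c_{\max}(t)/2$ gives the claim (modulo handling the small discarded slice around each $\mu^j_i$ exactly as in the proof of Lemma~\ref{lemma:logk-l1-drop}). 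The $k^{p-1}$ factor in the final approximation guarantee is precisely the cost of this per-dimension density lower bound.

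Plugging the three ingredients into the $\costinc'(r)$ framework of Section~\ref{sec:logk-upper-bound}, each halving round contributes expected cost
\[
M_p \cdot \frac{\cost_p(\mathcal{U})}{p\,L_p}\cdot c_{\max}(t)^p \;=\; O(k^{p-1}\log k)\cdot \cost_p(\mathcal{U}),
\]
and the active-cut argument bounds $\sum_r \E[\costinc'(r)]$ by $O(\log k)$ times this per-round cost, since each cut is active in only $O(\log k)$ rounds due to the $k^{O(1)}$ discard ratio. Multiplying then yields the claimed $O(k^{p-1}\log^2 k)\cdot \cost_p(\mathcal{U})$ bound.
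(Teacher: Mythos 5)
Your overall blueprint (round-based analysis, $\propdist$-sampling, a discard rule, a per-round cost bound, an ``active cut'' argument to keep the number of contributing rounds at $O(\log k)$) matches the paper's. Your analog of Lemma~\ref{lemma:l1-nr-cut} and your use of the power-mean/convexity bound $(a+b)^p \le 2^{p-1}(a^p+b^p)$ are both fine, and the final arithmetic does give $O(k^{p-1}\log^2 k)$. However, the central step — the lower bound on the probability that a random kept cut separates two given far-apart centers — is justified by an argument that is simply false.

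You claim that the discard rule forces every kept cut $(i,\theta)$ to satisfy $\min_j|\mu^j_i - \theta| \ge \Omega(c_{\max}(t)/k)$. This does not follow. The discard rule prevents separating two centers whose full $\ell_p$-distance is small, but a cut $(i,\theta)$ can have $\theta$ arbitrarily close to some $\mu^{j_1}_i$ in coordinate $i$ and still be kept, because $\bm\mu^{j_1}$ may be far (in $\ell_p$) from every center it gets separated from by virtue of the other $d-1$ coordinates. So there is no per-dimension density lower bound, and integrating a wrong pointwise bound over the ``separating slab'' does not prove the claim. What actually makes the bound true is a different mechanism, which the paper isolates via the \emph{pseudo-distance} $d_p$: the probability that a $\propdist$-cut separates $\bm\mu^a,\bm\mu^b$ is exactly $d_p(\bm\mu^a,\bm\mu^b)/L_p$ (up to the normalizing constant), where $d_p(\bm\mu^a,\bm\mu^b)=\sum_i\sum_{[c_\ell,c_{\ell+1}]}|c_{\ell+1}-c_\ell|^p$ runs over the at most $k-1$ subintervals of $[\mu^a_i,\mu^b_i]$ delimited by other centers' projections. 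One then uses H\"older's inequality to get $\|\bm\mu^a-\bm\mu^b\|_p^p \le k^{p-1}\,d_p(\bm\mu^a,\bm\mu^b)$, which is what produces the $k^{p-1}$ factor and replaces your incorrect per-dimension argument. The discard rule's role is only to control the total pseudo-mass of slabs near a close pair; it does not flatten the density within a slab.

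Two smaller points. First, once you introduce $d_p$, it is cleaner (as the paper does) to run both the discard rule and the halving potential $c'_{p,\max}(t)$ in pseudo-distance, and only pass to $\ell_p$ via H\"older when bounding the per-point reassignment cost; mixing an $\ell_p$-based discard rule with $d_p$-based separation probabilities works but requires explicitly converting and re-checking the constants. Second, your last paragraph is logically miscalibrated: a uniform per-round bound of $O(k^{p-1}\log k)\cost_p(\mathcal{U})$ times ``$O(\log k)$ because each cut is active in $O(\log k)$ rounds'' is not a valid step. The per-round cost must be kept in integral form over the round's \emph{active intervals}; one then sums over $r$ and uses that each dimension--interval pair is active in only $O(\log k)$ values of $r$ before collapsing the sum to $\cost_p(\mathcal{U})$. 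The paper's Lemma~\ref{lemma:lp-round-cost} and the display leading to \eqref{eq:pcbound} do exactly this bookkeeping.
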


\label{sec:proof-lp-norm}

To prove \cref{thm:lp-modified-alg}, we start by introducing some notation and making the definition of \(\propdist\) precise.
For a dimension \(i \in [d]\), let \(\mu_i^- = \min_{j \in [k] }\mu^j_i \) and \(\mu_i^+ =  \max_{j \in [k]}\mu^j_i \). 
For a dimension \(i \in [d]\) and two coordinates \(x, y \in \R\), let \(\intervals_i(x, y) \) be the set of consecutive intervals along the \(i\)-th dimension delimited by the coordinates \(x\) and \(y\) themselves and the projections of the centers in \( \mathcal{U} \) that lie between \(x\) and~\(y\). 
For example, consider the \(2\)-dimensional instance with four centers \(\bm\mu^1, \dots, \bm \mu^4\) shown in \cref{fig:intervals-on-dim}. 
On the horizontal axis, two coordinates \(x\) and \(y\) are marked along with the projections of the four centers \(\mu_1^1, \mu_1^2, \mu_1^3\), and \(\mu_1^4\).
Here, \(\intervals_1(x,y) \) consists of the three consecutive intervals \( [x, \mu^4_1], [\mu^4_1, \mu^2_1]\), and~\([\mu^2_1, y]\). 

\begin{figure}[ht]
\centering
\begin{tikzpicture}[scale=1.5]
    \draw [<->,thick] (0,2) node (yaxis) [above] {2} |- (3,0) node (xaxis) [right] {1};
  
    \coordinate (x) at (0.8, 0);
    \coordinate (y) at (2.3, 0);
    \coordinate (a) at (0.5, 1.2);
    \coordinate (b) at (1.3, 1.8);
    \coordinate (c) at (1.8, 0.6);
    \coordinate (d) at (2.8, 0.8);
    
    \draw[dashed] (xaxis -| x) node[below] {$x$};
    \draw[dashed] (xaxis -| y) node[below] {$y$};
    
    \draw[dashed] (xaxis -| b) node[below] {$\mu^4_1$};
    \draw[dashed] (b) -- (1.3,0);
    
    \draw[dashed] (xaxis -| c) node[below] {$\mu^2_1$};
    \draw[dashed] (c) -- (1.8,0);
    
    \draw[dashed] (xaxis -| a) node[below] {$\mu^1_1$};
    \draw[dashed] (a) -- (0.5,0);
    
    \draw[dashed] (xaxis -| d) node[below] {$\mu^3_1$};
    \draw[dashed] (d) -- (2.8,0);
    
    \fill[black] (a) circle (1pt) ;     
    \fill[black] (b) circle (1pt) ;     
    \fill[black] (c) circle (1pt) ;     
    \fill[black] (d) circle (1pt) ;     
    \draw [fill=white] (x) circle (1pt) ;     
    \draw [fill=white] (y) circle (1pt) ;     
     
    \draw (a) node [above]{\(\bm{\mu}^1\)};
    \draw (b) node [above]{\(\bm{\mu}^4\)};
    \draw (c) node [above]{\(\bm{\mu}^2\)};
    \draw (d) node [above]{\(\bm{\mu}^3\)};
    \draw (x) node [above]{};
    \draw (y) node [above]{};
\end{tikzpicture}
    \caption{Intervals defined by projecting points onto a coordinate axis.}
    \label{fig:intervals-on-dim}
\end{figure}
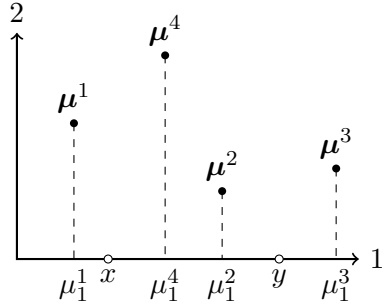

Observe that, by the definition of \(\intervals_i(x, y) \), we have \(|x - y| = \sum_{[a,b] \in \intervals_i(x, y)} |b - a| \).

\newcommand\all{\operatorname{all}}
Let \[\intervals_{\all} = \bigcup_{i\in [d]} \left\{ (i, [a,b]) : [a,b] \in \intervals_i(\mu_i^-, \mu_i^+) \right\} \]
denote the collection of all dimension--interval pairs that are delimited by the projections of the centers onto the respective dimensions.
We define \[L_p = \sum_{(i,[a,b]) \in \intervals_{\all}} |b-a|^p. \]

With the introduced notation, the distribution \( \propdist \) can be formally described as follows: We first select a dimension \(i\) and an interval \( [a, b] \in \intervals_i(\mu_i^-, \mu_i^+) \) along with dimension \(i\)  (i.e., we select a dimension--interval pair \((i,[a,b]) \in \intervals_{\all}\)) with probability  \( |b-a|^p/L_p \).
Then we pick \(\theta \in [a, b] \) randomly such that the p.d.f. \( \theta \) is \[P_{a,b}(\theta) := \frac{p\cdot 2^{p-1}}{(b-a)^p} \min(\theta - a, b - \theta)^{p-1}. \]

Another key component of the design and analysis of the generalized algorithm is a \emph{pseudo-distance} function.
For two points \(\bm x, \bm y \in \R^d \), Let \[\intervals(\bm x, \bm y) = \bigcup_{i\in[d]} \{ (i, [a,b]) : [a,b] \in \intervals_i(x_i,y_i) \}. \] 
We then define the pseudo-distance between \(\bm x\) and \(\bm y\) as 
\[d_p(\bm x, \bm y) = \sum_{(i, [a,b]) \in \intervals(\bm x, \bm y)} |b-a|^p.\]
Note that the \(p\)-th power of the \(\ell_p\) distance, \( \| \bm x - \bm y \|^p_p \), between two points \(\bm x\) and \(\bm y\) is defined as \( \sum_{i \in [d]} |x_i - y_i|^p \).
It is easy to see that \(\|\bm x-\bm y\|^p_p \geq d_p(\bm x, \bm y) \) since
\[
|x_i - y_i|^p = \left( \sum_{[a,b] \in \intervals_i(x_i, y_i)} |a-b|\right)^p \geq  \sum_{[a,b] \in \intervals_i(x_i, y_i)} |a-b|^p
\] for each dimension \(i\). For \(p = 1\), we have equality.

A key observation now is that, if we sample a cut from \(\mathcal{D}_p\), the probability that it separates two centers \(\bm\mu^g\) and \(\bm\mu^h \) is \emph{proportional to their pseudo-distance} \(d_p(\bm\mu^g,\bm\mu^h)\).

\subsection{The algorithm for \texorpdfstring{$\ell_p$}{lp}-norms with \texorpdfstring{$p \geq 1$}{p >= 1}.}

We now present the generalized algorithm. 
The only difference from the modified version of \cref{alg:random} is how we sample random cuts at~\cref{line:sample}.
Recall from \cref{ell-1} that we defined \(\treeleaves(t)\) to denote the state of \(\treeleaves\) at the beginning of the \(t\)-th iteration.
We define \(c'_{p, \max}(t)\) as the maximum pseudo-distance between any pair of centers in a leaf in \(\treeleaves(t)\).
Formally, \(c'_{p,\max}(t) = \max_{B \in \treeleaves(t)} \max_{\bm{\mu}^i,\bm{\mu}^j \in B} d_p(\bm{\mu}^i,\bm{\mu}^j)\).
Let \(c'_{p,\max} = c'_{p,\max}(1)\).

Now, in the sampling step (\cref{line:samplep}), we draw samples from \(\mathcal{D}_p\). 
However, we discard the cut if it separates any two centers in a leaf whose pseudo-distance is at most \(c'_{p,\max}(t)/k^4\).
Note that this is a generalization of the sample-discard algorithm from~\cref{sec:logk-upper-bound}.
We present the pseudo-code in~\cref{alg:randomp}.

\begin{algorithm}[ht]
  \DontPrintSemicolon
  \textbf{Input: } A collection of \( k \) centers \(\mathcal{U} = \{\bm{\mu}^1, \bm{\mu}^2, \ldots, \bm{\mu}^k\} \subset \R^d \). \;
  \textbf{Output: } A threshold tree with \(k\)-leaves. \;
  \(\treeleaves \gets \{ \mathcal{U} \}\) \; 
  \While{\( |\treeleaves| < k \) } { \label{line:whilep}
    Sample a cut \(\left( i,\theta \right) \) from \(\mathcal{D}_{p}\) \label{line:samplep} \; 
    \If{\((i,\theta)\) separates two centers that are closer than \(c'_{p,\max}(\cdot)/k^4\) in pseudo-distance}{ Discard the cut. \label{line:discard}}
    \Else{
    \For{each \(B \in \treeleaves \) that are split by \( (i, \theta) \)} { \label{line:forp}
      Split \(B\) into \(B^-\) and \(B^+\) and add them as left and right children of \(B\).\label{line:splitp} \; 
      Update \(\treeleaves\). \label{line:update_leavesp}
    }
    }
  }
  \textbf{return} the threshold tree defined by all cuts that separated some \( B \). \;
  \caption{Generalized explainable clustering algorithm for higher \(\ell_p\)-norms.}   
  \label{alg:randomp}
\end{algorithm}

Following the lines of \cref{sec:logk-upper-bound}, we now upper bound the expected cost of \cref{alg:randomp}. 

\begin{lemma}
    Fix the threshold cuts selected by \cref{alg:randomp} during the first \(t-1\) iterations. 
    Let \(M = 3 \cdot 4 \cdot \ln(k) \cdot L_p/c'_{p,max}(t)\). Then
    \[\Pr[c'_{p,\max}(t+M) \leq c'_{p,\max}(t)/2] \geq 1 - 1/k, \]
    where the probability is over the random cuts selected in iterations \(t, t+1, \dots, t+M-1\).
    \label{lemma:logk-lp-drop}
\end{lemma}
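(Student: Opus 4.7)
The plan is to follow the structure of Lemma~\ref{lemma:logk-l1-drop}, substituting the pseudo-distance $d_p$ for the $\ell_1$ distance, $L_p$ for the total side-length $L$, and samples from $\mathcal{D}_p$ for uniform samples. The workhorse fact, stated as a key observation in Section~\ref{sec:proof-lp-norm}, is that the probability a cut sampled from $\mathcal{D}_p$ separates a fixed pair of centers $\bm{\mu}^g, \bm{\mu}^h$ equals $d_p(\bm{\mu}^g, \bm{\mu}^h)/L_p$. I would verify this from the definition of $\mathcal{D}_p$: the dimension--interval pair $(i,[a,b])$ is selected with probability $|b-a|^p/L_p$ and any $\theta \in [a,b]$ gives a cut separating $\bm{\mu}^g$ from $\bm{\mu}^h$ exactly when $(i,[a,b]) \in \intervals(\bm{\mu}^g, \bm{\mu}^h)$, so summing over these intervals yields $d_p(\bm{\mu}^g, \bm{\mu}^h)/L_p$.

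Next, fix an iteration $t' \in \{t,\ldots,t+M-1\}$ and condition on the history up to $t'$. A cut is discarded only if it separates some pair of centers with pseudo-distance at most $c'_{p,\max}(t')/k^4 \leq c'_{p,\max}(t)/k^4$, where the inequality uses that $c'_{p,\max}(\cdot)$ is non-increasing (so the discard threshold can only shrink). A union bound over the at most $\binom{k}{2}$ such pairs, combined with the workhorse identity, yields
\[
\Pr[\text{cut at } t' \text{ is discarded}] \;\le\; \binom{k}{2} \cdot \frac{c'_{p,\max}(t)/k^4}{L_p} \;\le\; \frac{c'_{p,\max}(t)}{2k^2 L_p}.
\]

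Now fix any pair $\bm{\mu}^g, \bm{\mu}^h$ sharing a leaf in $\treeleaves(t)$ with $d_p(\bm{\mu}^g,\bm{\mu}^h) \geq c'_{p,\max}(t)/2$. Subtracting the discard bound from the separation probability gives, uniformly in the history, that the probability the cut at iteration $t'$ separates this pair and is kept is at least $c'_{p,\max}(t)/(2L_p) - c'_{p,\max}(t)/(2k^2 L_p) \geq c'_{p,\max}(t)/(4L_p)$ for $k \geq 2$. Consequently, the probability that none of the $M = 12\ln(k)\cdot L_p/c'_{p,\max}(t)$ iterations separates this pair is at most $(1 - c'_{p,\max}(t)/(4L_p))^M \leq e^{-3\ln k} = 1/k^3$. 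A final union bound over the $\binom{k}{2}$ pairs at pseudo-distance $\geq c'_{p,\max}(t)/2$ gives total failure probability at most $1/k$; in the complementary event every such pair has been separated, so $c'_{p,\max}(t+M) \leq c'_{p,\max}(t)/2$.

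The main obstacle is really just verifying the separation identity $\Pr[\text{separate } \bm{\mu}^g,\bm{\mu}^h] = d_p(\bm{\mu}^g,\bm{\mu}^h)/L_p$, since this is the essential fact that elevates the uniform-sampling argument from $\ell_1$ to general $\ell_p$; once it is in hand, the rest of the bookkeeping mirrors the $\ell_1$ proof almost verbatim. A minor subtlety is that the per-iteration discard events are correlated through the history-dependent threshold $c'_{p,\max}(t')/k^4$, but the uniform per-iteration lower bound on the success probability suffices to drive the geometric-decay step without any independence assumption.
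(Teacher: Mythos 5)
Your proof is correct and takes essentially the same route as the paper: you lower-bound the per-iteration probability of separating a far pair and keeping the cut by combining the identity $\Pr[\text{separate }\bm{\mu}^g,\bm{\mu}^h] = d_p(\bm{\mu}^g,\bm{\mu}^h)/L_p$ with a union bound over the bad (too-close) pairs, then apply the usual geometric decay and a final union bound. The paper packages the discard bound via the set of ``bad'' dimension--interval pairs while you bound the discard event directly via a union bound over the pairs of centers themselves, but these are the same estimate and lead to the identical $c'_{p,\max}(t)/(4L_p)$ lower bound.
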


\begin{proof}
    \newcommand{\badpairs}{\operatorname{TooClose}}
    \newcommand{\badintervals}{\intervals_{\operatorname{bad}}}
    
    We begin by introducing a few more notations that are useful in the analysis.
    For an iteration~\(t\), let \(\badpairs(t)\) be the set of pairs of centers \((\bm{\mu}^g, \bm{\mu}^h)\) that satisfy \(d_p(\bm{\mu}^g,\bm{\mu}^h) \leq c'_{p,\max}(t)/k^4\). 
    In other words, \(\badpairs(t)\) contains pairs of centers that the algorithm is not allowed to separate at the \(t\)-th iteration.
    Note that for any \((\bm{\mu}^g, \bm{\mu}^h) \in \badpairs(t)\), both \(\bm{\mu}^g\) and \( \bm{\mu}^h\) will be in the same leaf in \(\treeleaves(t)\).
    Let \[\badintervals(t) = \bigcup_{(\bm\mu^g,\bm\mu^h) \in \badpairs(t)} \intervals(\bm\mu^g, \bm\mu^h) \] be the set of dimension--interval pairs \((i,[a,b])\) such that making a cut in interval \([a,b]\) along dimension \(i\) will separate a pair of centers in \(\badpairs\).
    Observe that a cut that is made outside of \(\badintervals(t)\) will not separate any pair of centers in \(\badpairs\).
    
    Consider a leaf \(B \in \treeleaves(t) \) and two centers \(\bm{\mu}^g\) and \(\bm{\mu}^h\) in \( B \) such that \(c'_{p,\max}(t)/2 \leq d_p(\bm{\mu}^g, \bm{\mu}^h) \leq c'_{p,\max}(t)\).
      
    Note that
    \begin{align*}
    \sum_{[a,b]\in\badintervals(t)} |b-a|^p  &\leq \sum_{(\bm\mu^{g'}, \bm\mu^{h'}) \in \badpairs(t)} \sum_{(i, [a,b]) \in \intervals(\bm\mu^{g'}, \bm\mu^{h'})} |b-a|^p \\
    &= \sum_{(i, [a,b]) \in \intervals(\bm\mu^{g'}, \bm\mu^{h'})} d_p(\bm\mu^{g'}, \bm\mu^{h'})  \\
    &\leq \binom{k}{2} \frac{c'_{p,\max}(t)}{k^4} \leq \frac{c'_{p,\max}(t)}{4}.
    \end{align*}
    In the last inequality, we use that \(k \geq 2\).
      
    Hence, the probability that a cut selected at the \(t\)-th iteration separates \(\bm\mu^g\) and \(\bm\mu^h\) is at least
    \[ \frac{d_p(\bm\mu^g, \bm\mu^h)}{L_p} -  \frac{\sum_{[a,b]\in\badintervals(t)} |b-a|^p}{L_p} \geq \frac{c'_{p,\max}(t)}{2L_p} - \frac{c'_{p,\max}(t)}{4L_p} \geq \frac{c'_{p,\max}(t)}{4L_p}. \] 
    The proof now follows by replacing \(c_{\max}(t)\) with \(c'_{p,\max}(t)\) and \(L\) with \(L_p\) in the remaining part of the proof of \cref{lemma:logk-l1-drop}.
\end{proof}

In the following analysis, we use the H\"{o}lder's inequality stated below:
\begin{lemma}[H\"{o}lder's inequality]
   For two real numbers \(u\) and \(v\) such that \(1/u + 1/v = 1\) and two positive real number sequences \(y_1, \dots, y_m\) and \(z_1, \dots, z_m\), it holds that 
   \[ \sum_{i\in[m]} y_i z_i \leq \left( \sum_{i\in[m]}y_i^u \right)^{1/u} \left( \sum_{i\in[m]}z_i^v \right)^{1/v}.\]
   In particular, setting \(y_1 = y_2 = \dots = y_m = 1\), \(u=p/(p-1)\) and \(v=p\) for some \(p\), and taking the \(p\)-th power on both sides, it holds that
   \[ \left(\sum_{i\in[m]}z_i\right)^p \leq m^{p-1} \sum_{i\in[m]} z_i^p .\] 
\end{lemma}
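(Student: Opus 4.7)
The plan is to prove Hölder's inequality by the standard two-step argument based on Young's inequality, followed by a straightforward substitution for the ``in particular'' statement. First, I would establish Young's inequality: for nonnegative reals $a,b$ and conjugate exponents $u,v>1$ with $1/u+1/v=1$, it holds that $ab \leq a^u/u + b^v/v$. The cleanest route is to take logarithms and invoke concavity of $\ln$: if $a,b>0$, then $\ln(a^u/u + b^v/v) \geq (1/u)\ln(a^u) + (1/v)\ln(b^v) = \ln(ab)$, and the degenerate case $a=0$ or $b=0$ is immediate. (Equivalently, one can use convexity of $\exp$ applied to $\ln(a^u)/u + \ln(b^v)/v$.)

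Second, I would reduce Hölder's inequality to the normalized case. Let $Y = \bigl(\sum_i y_i^u\bigr)^{1/u}$ and $Z = \bigl(\sum_i z_i^v\bigr)^{1/v}$. If $Y=0$ or $Z=0$, all the corresponding $y_i$ or $z_i$ vanish and the inequality is trivial, so assume $Y,Z>0$. Apply Young's inequality to $a_i = y_i/Y$ and $b_i = z_i/Z$ to obtain
\[
\frac{y_i z_i}{YZ} \;\leq\; \frac{y_i^u}{u\,Y^u} + \frac{z_i^v}{v\,Z^v}.
\]
Summing over $i \in [m]$, the right-hand side telescopes to $1/u + 1/v = 1$, yielding $\sum_i y_i z_i \leq YZ$, which is exactly Hölder's inequality.

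For the ``in particular'' statement, I would substitute $y_i = 1$ for all $i$, $u = p/(p-1)$, and $v = p$, which satisfy $1/u + 1/v = 1$. Then $\bigl(\sum_i y_i^u\bigr)^{1/u} = m^{(p-1)/p}$ and $\bigl(\sum_i z_i^v\bigr)^{1/v} = \bigl(\sum_i z_i^p\bigr)^{1/p}$, so Hölder gives
\[
\sum_{i \in [m]} z_i \;\leq\; m^{(p-1)/p} \Bigl(\sum_{i \in [m]} z_i^p\Bigr)^{1/p}.
\]
Raising both sides to the $p$-th power yields the claimed bound $\bigl(\sum_i z_i\bigr)^p \leq m^{p-1}\sum_i z_i^p$.

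There is no real obstacle here; this is a textbook result. The only subtle point is handling the boundary cases ($Y=0$ or $Z=0$) cleanly before the normalization step, and verifying that $p/(p-1)$ and $p$ are valid conjugate exponents when $p > 1$ (the case $p=1$ makes the ``in particular'' inequality a trivial equality, so can be noted separately).
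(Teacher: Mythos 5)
Your proof is correct. The paper itself offers no proof of this lemma -- it is invoked as a standard textbook fact -- so there is no ``paper approach'' to compare against; your argument via Young's inequality (from concavity of $\ln$), normalization by $Y=\bigl(\sum_i y_i^u\bigr)^{1/u}$ and $Z=\bigl(\sum_i z_i^v\bigr)^{1/v}$, and the substitution $y_i=1$, $u=p/(p-1)$, $v=p$ is exactly the standard derivation and handles the degenerate cases ($Y=0$ or $Z=0$, and $p=1$) appropriately. Nothing is missing.
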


We now upper bound the expected cost. 
Recall that \(\closecenter(\bm x)\) denotes the closest center in \(\mathcal{U}\) to a point \(\bm x \in \bm{\mathcal{X}}\) and that  \(\cost_p(\mathcal{U})\) is defined as
 \[\cost_p(\mathcal{U}) = \sum_{\bm x \in \bm{\mathcal{X}}} \|\bm x - \closecenter(\bm x)\|^p_p = \sum_{\bm x \in \bm{\mathcal{X}}} \sum_{i \in [d]} |x_i - \closecenter(\bm x)_i |^p. \]
To bound the cost of the output clustering in the \(k\)-medians setting, we used the triangle inequality.
For general \(p\)-th power of \(p\)-norms, we use the following generalized triangle inequality:
\begin{lemma}
  Consider three points \(\bm x, \bm y, \bm z \in \R^d\). We have
  \(\|\bm z - \bm x\|^p_p \leq 2^{p-1}\left(\|\bm z - \bm y\|^p_p+ \| \bm y - \bm x\|^p_p\right) \). \label{lemma:trianglep}
\end{lemma}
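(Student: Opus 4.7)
The plan is to reduce the claim to a one-dimensional (in fact, one-coordinate) statement and then invoke the consequence of H\"{o}lder's inequality stated just above the lemma. Concretely, I would show that for every coordinate $i \in [d]$,
\[
|z_i - x_i|^p \;\leq\; 2^{p-1}\bigl(|z_i - y_i|^p + |y_i - x_i|^p\bigr),
\]
and then sum over $i \in [d]$ to obtain the lemma, using the definition $\|\bm v\|_p^p = \sum_{i\in[d]} |v_i|^p$.

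For the one-coordinate claim, I would first apply the ordinary triangle inequality on $\R$ to get $|z_i - x_i| \leq |z_i - y_i| + |y_i - x_i|$. Since both sides are non-negative and $t \mapsto t^p$ is non-decreasing on $[0,\infty)$ for $p \geq 1$, raising to the $p$-th power preserves the inequality, giving
\[
|z_i - x_i|^p \;\leq\; \bigl(|z_i - y_i| + |y_i - x_i|\bigr)^p.
\]
Then I would apply the consequence of H\"{o}lder's inequality stated in the paper, with $m = 2$, $z_1 = |z_i - y_i|$, and $z_2 = |y_i - x_i|$, which yields $(z_1 + z_2)^p \leq 2^{p-1}(z_1^p + z_2^p)$, closing the coordinate-wise bound.

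Summing the coordinate-wise inequality over $i \in [d]$ gives
\[
\|\bm z - \bm x\|_p^p \;=\; \sum_{i\in[d]} |z_i - x_i|^p \;\leq\; 2^{p-1} \sum_{i\in[d]} \bigl(|z_i - y_i|^p + |y_i - x_i|^p\bigr) \;=\; 2^{p-1}\bigl(\|\bm z - \bm y\|_p^p + \|\bm y - \bm x\|_p^p\bigr),
\]
which is exactly the statement. There is no real obstacle here: the proof is a two-line reduction that combines the scalar triangle inequality with the convexity of $t^p$ (equivalently, the stated specialization of H\"{o}lder's inequality). The only thing to be careful about is that the intermediate step requires $p \geq 1$ so that $t^p$ is monotone on $[0,\infty)$ and so that H\"{o}lder applies with exponent $p$; both are hypotheses of the broader setting.
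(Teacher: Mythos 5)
Your proof is correct and essentially the same as the paper's: both reduce to the one-coordinate case and close with the same specialization $(a+b)^p \leq 2^{p-1}(a^p+b^p)$ of H\"older's inequality. The only cosmetic difference is that you invoke the scalar triangle inequality and the monotonicity of $t\mapsto t^p$ to handle all configurations of $x,y,z$ at once, whereas the paper splits off the easy cases $y\leq x$ and $y\geq z$ by hand before applying H\"older on $a=y-x$, $b=z-y$; your version is a touch slicker but the argument structure is identical.
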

\begin{proof}
Expanding \(\| \cdot \|^p_p\) as a summation over \(d\) dimensions, it is sufficient to prove that for any three real numbers \(x, y, z \in \R \), \(|z-x|^p \leq 2^{p-1}(|z-y|^p + |y-x|^p)\). 
    Without loss of generality, assume that \(z \geq x\). If \(y \leq x\) or \(y \geq z\), the proof follows trivially because we have \(|z-x| \leq |z - y|\) or \( |z-x| \leq |y-z| \), respectively.
    Now suppose that \(x \leq y \leq z\). 
    Let \(a=y-x\) and \(b=z-y\). 
    Since \(a+b = z-x\), we simply need to prove that \((a+b)^p \leq 2^{p-1}(a^p + b^p)\) which follows from H\"{o}lder's inequality.
\end{proof}

Recall that we defined \(c'_{p,\max}(t)\) and \(c'_{p,\max}\) earlier using the pseudo-distance function \(d_p\). 
We now define \(c_{p,\max}(t)\) and \(c_{p,\max}\) similarly, but using the \(p\)-th power of the \(\ell_p\) norm: Namely,
\(c_{p,\max}(t) = \max_{B \in \treeleaves(t)} \max_{\bm{\mu}^i,\bm{\mu}^j \in B} \|\bm{\mu}^i - \bm{\mu}^j\|^p_p \) and \( c_{p,\max} = c_{p,\max}(1) \).
We again use \((i_t, \theta_t)\) to denote the cut selected by \cref{alg:randomp} in the \(t\)-th iteration and \(f_i(\theta)\) to denote the number of points \(\bm x \in \mathcal{X}\) that are separated from \( \closecenter(\bm x) \) by a cut \((i,\theta)\).
 
For a point \(\bm x \in \mathcal{X}\), suppose that it is assigned to some center \(\bm \mu\) in the final threshold tree. If \(\bm \mu = \closecenter(\bm x)\), the cost contribution of \(\bm x \) in the final clustering is the same as that in the original clustering.
Suppose \(\bm \mu \neq \closecenter(\bm x)\) and suppose that \(\bm x\) was separated from \(\closecenter(\bm x)\) at iteration \(t\).
Then, using \cref{lemma:trianglep}, we conclude that the cost of assigning \(\bm x\) to \(\bm\mu\), i.e., \(\|\bm x - \bm \mu\|^p_p\), is upper bounded by
\[ 
2^{p-1} \left( \|\bm x - \closecenter(\bm x)\|^p_p + \| \closecenter(\bm x) - \mu \|^p_p \right) \leq 2^{p-1} \left( \|\bm x - \closecenter(\bm x)\|^p_p + c_{p,\max}(t) \right).
\]
\newcommand{\usedcuts}{\operatorname{UsedCuts}}
Let \(\usedcuts\) be the set of cuts used to split some leaf in \cref{line:splitp} of \cref{alg:randomp}.
Now using the above observation, we can upper bound the expected cost of the output tree, \(\E[ \cost_p(T)]\), by
\[\cost_p(T) \leq 2^{p-1}\left( \cost_p(\mathcal{U}) +  \sum_{r = 0}^{\infty} \E[\costinc(r)] \right) \]
where 
\[
\costinc(r) = \sum_{t: \frac{c'_{p,\max}}{2^{r+1}}\leq c'_{p,\max}(t) \leq \frac{c'_{p,\max}}{2^{r}}} c_{p,\max}(t) \cdot f_{i_t}(\theta_t)  \cdot \mathbbm{1}[(i_t,\theta_t)  \in \usedcuts].
\]
Note that in the last expression, the summed terms use \(c_{p,\max}(t)\) whereas the condition of the summation uses \(c'_{p, \max}(t)\).
Note that
\begin{align*}
   \costinc(r) &\leq  \sum_{t: \frac{c'_{p,\max}}{2^{r+1}}\leq c'_{p,\max}(t) \leq  \frac{c'_{p,\max}}{2^{r}}} k^{p-1}c'_{p,\max}(t) \cdot f_{i_t}(\theta_t) \cdot \mathbbm{1}[(i_t,\theta_t)  \in \usedcuts] \nonumber  \\
   &\leq  k^{p-1}\frac{c'_{p,\max}}{2^{r}} \cdot \sum_{t: \frac{c'_{p,\max}}{2^{r+1}}\leq c'_{p,\max}(t) \leq  \frac{c'_{p,\max}}{2^{r}}}f_{i_t}(\theta_t) \cdot \mathbbm{1}[(i_t,\theta_t)  \in \usedcuts].
\end{align*}
The first inequality is by H\"{o}lder's inequality (applied independently in each dimension in the computation of respective \(d_p\) and \(\| \cdot\|^p_p \) values).
The second inequality simply uses the condition of the summation.

\newcommand{\activeintervals}{\intervals_{\operatorname{act}}}

We now upper bound the expected value of \(\costinc(r)\). 
Let \(\activeintervals(r) \) be the set of dimension--interval pairs in \(\intervals\) that do not separate any pair of centers that are closer than \(c'_{p,\max}/(k^4 2^{r+1})\) in pseudo-distance but separate at least one pair of centers that are closer than  \(c'_{p,\max}/2^r\) in pseudo-distance. 
We prove the following lemma which is analogous to \cref{lemma:log-l1-drop} in \cref{sec:logk-upper-bound}.
\begin{lemma}
    For a round $r$, \(\E[\costinc(r)]\) is
    \[
    O\left(k^{p-1} \cdot \log k\right) \cdot \left( \sum_{(i, [a, b]) \in  \activeintervals(r)}  |b - a|^p \int_{a}^{b} P_{a,b}\left(\theta\right) f_i(\theta) d\theta \right).
    \] \label{lemma:lp-round-cost}
\end{lemma}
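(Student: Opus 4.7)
The plan is to mirror the structure of the proof of \cref{lemma:log-l1-drop} from the $\ell_1$ case, with three main substitutions: cuts are drawn from $\mathcal{D}_p$ rather than uniformly; the pseudo-distance $d_p$ replaces the $\ell_1$-distance throughout the ``diameter'' reasoning; and the true per-cut cost factor $c_{p,\max}(t)$ is controlled by $k^{p-1}\cdot c'_{p,\max}(t)$ via the coordinate-wise H\"older inequality already established earlier in the section. I would fix a round $r$ and consider any iteration $t'$ with $c'_{p,\max}(t')\in (c'_{p,\max}/2^{r+1},\, c'_{p,\max}/2^r]$; otherwise $\costinc(r)=0$ and the bound is trivial.

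The first key step is to show that any cut $(i_{t'},\theta_{t'})$ actually added to the tree during such an iteration lies in $\activeintervals(r)$. Surviving the discard step in \cref{line:discard} forbids separating pairs of centers at pseudo-distance at most $c'_{p,\max}(t')/k^4\ge c'_{p,\max}/(k^4\cdot 2^{r+1})$, and splitting a current leaf forces the cut to separate some pair of centers that sits inside a leaf and hence is at pseudo-distance at most $c'_{p,\max}(t')\le c'_{p,\max}/2^r$. Writing the density of $\mathcal{D}_p$ at $(i,\theta)$ as $|b-a|^p P_{a,b}(\theta)/L_p$ for $(i,[a,b])$ the unique dimension--interval pair containing $\theta$, the expected value of $f_{i_{t'}}(\theta_{t'})\,\mathbbm{1}[(i_{t'},\theta_{t'})\in\usedcuts]$ over a single sample is therefore at most
\[
\sum_{(i,[a,b])\in\activeintervals(r)}\frac{|b-a|^p}{L_p}\int_a^b P_{a,b}(\theta)\,f_i(\theta)\,d\theta,
\]
and multiplying by the per-cut cost factor $k^{p-1}\cdot c'_{p,\max}/2^r$ gives the expected contribution of one cut to $\costinc(r)$.

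The second key step is to control the expected number of iterations spent in round $r$. By \cref{lemma:logk-lp-drop}, a block of $M=12\ln(k)\cdot L_p/c'_{p,\max}(t')$ cuts halves $c'_{p,\max}$, and hence ends round $r$, with probability at least $1-1/k$; the constraint $c'_{p,\max}(t')> c'_{p,\max}/2^{r+1}$ yields $M\le O(\log k\cdot L_p \cdot 2^{r+1}/c'_{p,\max})$. Multiplying this number of cuts by the per-cut bound above, the factors $L_p$ and $2^r$ cancel, leaving precisely $O(k^{p-1}\log k)\cdot\sum_{(i,[a,b])\in\activeintervals(r)}|b-a|^p\int_a^b P_{a,b}(\theta)\,f_i(\theta)\,d\theta$, as required.

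The main obstacle I anticipate is converting the high-probability halving guarantee of \cref{lemma:logk-lp-drop} into an \emph{expectation} bound on the cumulative cost within round $r$, since in principle the algorithm might need several blocks of $M$ cuts to exit the round. I would address this with the same geometric-series argument as in the proof of \cref{lemma:cost-increase}: the number of $M$-blocks until success is stochastically dominated by a geometric random variable with success probability $1-1/k$, so the total expected cost inflates by at most a constant factor, which is safely absorbed into the $O(\cdot)$.
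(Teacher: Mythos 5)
Your proposal follows essentially the same route as the paper: bound the per-cut cost by $k^{p-1}c'_{p,\max}/2^r$ via the coordinate-wise H\"older inequality, argue that any cut actually added during round $r$ must fall in $\activeintervals(r)$ (surviving the discard test forces the lower pseudo-distance bound, splitting a leaf forces the upper one), bound the expected per-cut contribution by integrating $|b-a|^p P_{a,b}(\theta) f_i(\theta)/L_p$ over $\activeintervals(r)$, and then convert the high-probability halving from \cref{lemma:logk-lp-drop} into an expectation via the geometric-trials argument of \cref{lemma:cost-increase}. All the key steps and constants match; the proposal is correct.
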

\begin{proof}
We consider ``trials'' of \(M\) consecutive iterations in round \(r\) where \[M =12 \cdot 2^{r+1} \ln(k) \cdot L_p/c'_{p,max}.\]
We perform independent trials until \(c'_{p,\max}(\cdot)\) at the end of a trial goes below \(c'_{p,\max}/2^{r+1}\).

Consider one trial and let \(s\) be the starting iteration of the trial.
Note that we have \[M \geq  3 \cdot 4 \cdot \ln(k) \cdot L_p/c'_{p,\max}(s)\] since \(c'_{p,\max}(s)\geq c'_{p,\max}/2^{r+1}\).
Thus, by \cref{lemma:logk-lp-drop}, after \(M\) iterations, round \(r\) ends with probability at least \(1-1/k \geq 1/2\). 
(Note that round \(r\) may end before all M iterations of a trial are completed.  
In such trials, we assume that we discard the additional cuts that are made after the round ends.)

Let \(\operatorname{UB_\ell} = \sum_{s=t'}^{t'+M-1}  f_{i_s}(\theta_s) \cdot \mathbbm{1}[(i_s,\theta_s)  \in \usedcuts] \) and observe that
\begin{align}
    \costinc(r) \leq k^{p-1} \frac{c'_{p,\max}}{2^{r}} \cdot \sum_{\ell} \operatorname{UB}_\ell \label{eq:costincwithub}
\end{align} where the sum is over all trials we perform in round \(r\).

We first upper bound each term \(\operatorname{UB}_\ell\) and then use the expectation of a geometric random variable to upper bound the expected value of \(\sum_{\ell}\operatorname{UB}_\ell\).
We have 
\begin{align}
\E[\operatorname{UB}_\ell] &\leq \sum_{s=t'}^{t'+M-1} \E\left[f_{i_s}(\theta_s) \cdot \mathbbm{1}[(i_s,\theta_s)  \in \usedcuts]\right] \nonumber \\
&\leq  \sum_{s=t'}^{t'+M-1} \frac{1}{L_p}  \sum_{(i, [a, b]) \in  \activeintervals(r)}  |b - a|^p \int_{a}^{b} P_{a,b}\left(\theta\right) f_i(\theta) d\theta. \label{eq:ubsum}
\end{align}

Note that we only sum over dimension--interval pairs in \(\activeintervals(r)\) as cuts made outside of this set will be discarded. 
To elaborate, the dimension--interval pair in which a cut \((i,\theta)\) is made can be outside of \(\activeintervals(r)\) for two reasons:
\begin{enumerate}
    \item Because it separates two centers that are closer than \(c'_{p,\max}/(k^42^{r+1}) \leq c'_{p,\max}(t')/k^4\). Then it will get discarded in \cref{line:discard}.
    \item Because it does not separate two centers that are closer than \(c'_{p,\max}/2^r\). Such a cut will not split any leaves in \cref{line:forp}.
\end{enumerate}
Consequently, for all \((i_s,\theta_s) \in \usedcuts\), we have \(\theta_s \in [a,b]\) for some interval \([a,b]\) such that \((i_s, [a,b]) \in \activeintervals(r)\).

Now, since the summed terms in \eqref{eq:ubsum} no longer depend on the summed index \(s\),  we now have
\begin{align*}
\E[\operatorname{UB}_\ell] &\leq \frac{M}{L_p} \sum_{(i, [a, b]) \in  \activeintervals} |b - a|^p \int_{a}^{b} P_{a,b}\left(\theta\right) f_i(\theta) d\theta \\
&= \frac{12 \cdot 2^{r+1} \ln(k)}{c'_{p,\max}} \sum_{(i, [a, b]) \in  \activeintervals} |b - a|^p \int_{a}^{b} P_{a,b}\left(\theta\right) f_i(\theta) d\theta.
\end{align*}

Now, considering that round \(r\) ends after a trial with probability at least \(1/2\), using the expected value of a geometric distribution, we conclude that
\[ \E \left[ \sum_{\ell}\operatorname{UB}_\ell \right] \leq \frac{24 \cdot 2^{r+1} \ln(k)}{c'_{p,\max}} \sum_{(i, [a, b]) \in  \activeintervals(r) } |b - a|^p \int_{a}^{b} P_{a,b}\left(\theta\right) f_i(\theta) d\theta. \]
The proof of the lemma then follows by combining this with the bound in \eqref{eq:costincwithub}.
\end{proof}

With \cref{lemma:lp-round-cost} in hand, we now prove \cref{thm:lp-modified-alg}.
\begin{proof}[Proof of \cref{thm:lp-modified-alg}]
    
Using \cref{lemma:lp-round-cost}, we can upper bound \(\sum_{r=0}^{\infty}\E[\costinc(r)]\) as follows:
\begin{align}
    \sum_{r=0}^{\infty}\, &\E[\costinc(r)] \nonumber \\
    &= O\left(k^{p-1} \cdot \log k\right) \cdot \sum_{r=0}^{\infty}  \sum_{(i, [a, b]) \in  \activeintervals(r)}  |b - a|^p \int_{a}^{b} P_{a,b}\left(\theta\right) f_i(\theta) d\theta \nonumber  \allowdisplaybreaks \\
&= O\left(k^{p-1} \cdot \log k\right) \cdot \sum_{(i, [a, b]) \in  \intervals} \sum_{r=0}^{\infty} \mathbbm{1}[(i, [a,b]) \in \activeintervals(r)] \cdot \left( |b - a|^p \int_{a}^{b} P_{a,b}\left(\theta\right) f_i(\theta) d\theta \right)  \nonumber \allowdisplaybreaks \\
&= O\left(k^{p-1} \cdot \log k\right) \cdot \sum_{(i, [a, b]) \in  \intervals}  |b - a|^p \int_{a}^{b} P_{a,b}\left(\theta\right) f_i(\theta) d\theta  \cdot  \left(\sum_{r=0}^{\infty} \mathbbm{1}[(i, [a,b]) \in \activeintervals(r)]\right). \nonumber \allowdisplaybreaks
\end{align}
We now claim that for any dimension--interval pair in \((i,[a,b]) \in \intervals\) 
\begin{align}
\sum_{r=0}^{\infty} \mathbbm{1}[(i, [a,b]) \in \activeintervals(r)] = \bigl\vert \left\{ r : (i,[a,b]) \in \activeintervals(r) \right\} \bigr\vert = O(\log k). \label{eq:rounds-for-interval}
\end{align}
Namely, fix some dimension--interval pair \((i,[a,b]) \in \intervals \). Let \(c\) be the smallest pseudo-distance between any pair of centers that are separated if a cut \((i,\theta)\) such that \(\theta \in [a, b]\) is made. 
Then \((i,[a,b])\) is in \(\activeintervals(r)\) only if \(c \leq c'_{p,\max}/2^r\) and \(c'_{p,\max}/2^{r+1} \leq k^4 c \), or equivalently, \( \log (c'_{p,\max}/{2 c k^4})  \leq r \leq \log (c'_{p,\max}/c) \) which yields \eqref{eq:rounds-for-interval}.
Thus we have
\begin{align}
\sum_{r=0}^{\infty}&\E[\costinc(r)] \nonumber \\
&= O\left(k^{p-1} \cdot \log^2 k\right) \cdot \sum_{(i, [a, b]) \in  \intervals}  \left( |b - a|^p \int_{a}^{b} P_{a,b}\left(\theta\right) f_i(\theta) d\theta \right) \nonumber \\
&= O\left(k^{p-1} \cdot \log^2 k\right) \cdot \sum_{(i, [a, b]) \in  \intervals}  \left( |b-a|^p \int_{a}^{b} p 2^{p-1} \frac{\min(\theta-a, b-\theta)^{p-1}}{|b-a|^p} f_i(\theta) d \theta \right) \nonumber \\
&= O\left(k^{p-1} \cdot \log^2 k\right) \cdot (p 2^{p-1}) \cdot \sum_{(i, [a, b]) \in  \intervals} \int_{a}^{b} \min(\theta-a, b-\theta)^{p-1} f_i(\theta) d \theta. \label{eq:expected_cost_inc_p}
\end{align}

Now to conclude the proof of \cref{thm:lp-modified-alg}, let  \( \cost'_p(\mathcal{U}) = \sum_{\bm x \in \mathcal{X}} d_p(\bm x, \closecenter(\bm x))\) which is the cost of \(\mathcal{U}\) defined in terms of the pseudo-distances. 
Recall that \( \|\bm x - \bm y\|^p_p \geq d_p(\bm x, \bm y) \) and hence we have  \(\cost_p(\mathcal{U}) \geq \cost'_p(\mathcal{U})\) where the equality holds if \(p=1\).
We then have
\begin{align}
   \cost_p(\mathcal{U}) &\geq \cost'_p(\mathcal{U}) = \sum_{\bm x \in \bm{\mathcal{X}}} d_p(\bm x, \closecenter(\bm x))  =  \sum_{\bm x \in \bm{\mathcal{X}}} \sum_{i \in [d]} \sum_{[a,b] \in \intervals_i(x_i,\closecenter(x)_i)} |a - b|^p \nonumber \\
    &= \sum_{\bm x \in \bm{\mathcal{X}}} \sum_{i \in [d]} \sum_{[a,b] \in \intervals_i(x_i,\closecenter(x)_i)} \int_a^b p (\theta - a)^{p-1} d \theta \nonumber \\
    &\geq  \sum_{\bm x \in \bm{\mathcal{X}}} \sum_{(i,[a,b]) \in \intervals}  \int_a^b p \cdot \min(\theta - a, b-\theta)^{p-1} \cdot \mathbbm{1}[\theta \text{ is between } x_i \text{ and } \closecenter(x)_i] \, d \theta \label{eq:explain} \\ 
    &=  p  \sum_{(i,[a,b]) \in \intervals}   \int_a^b \min(\theta - a, b-\theta)^{p-1} f_i(\theta) \, d \theta \label{eq:pcbound}.
\end{align}
    
The inequality in \eqref{eq:explain} above needs an explanation. First notice that, by the definition of \(\intervals\), summing over \((i, [a,b]) \in \intervals\) is the same as summing over \(i \in [d]\) and \([a,b] \in \intervals_i(\mu_i^-, \mu_i^+)\) .
Fix some point \(\bm{x}\) and dimension \(i\), and w.l.o.g. assume that
\(x_i \leq \closecenter(\bm x)_i\).
Then each interval \([a,b] \in \intervals_i(\mu_i^-, \mu_i^+)\) falls into one of the following categories:
\begin{enumerate}
    \item \(b \leq x_i \) or \(\closecenter(\bm x))_i \leq a\). 
    The contribution from such intervals are zero in both sides of the inequality in \eqref{eq:explain}.
    \item \(x_i \leq a \leq b \leq \closecenter(\bm x)_i \). 
    The contribution from such intervals to the left hand side of \eqref{eq:explain} is at least as the contribution to the right hand side because of the \(\min\) function. 
    \item \(a < x_i < b\). 
    Note that \(\int_a^b(\theta-a)^{p-1} d\theta = \int_a^b(b-\theta)^{p-1} d \theta\). 
    Hence, in this case, the contributions to both sides of \eqref{eq:explain} are equal if \(x_i \geq (a+b)/2 \). 
    Otherwise, the contribution to the L.H.S. is higher.
    \item \(a < \closecenter(\bm x) < b\). 
    This case is analogous to Item~3.
\end{enumerate}
The inequality in \eqref{eq:explain} follows by applying this observation to each interval in \(\intervals\) and each point in \(\bm{\mathcal{X}}\).

The theorem statement then follows by combining bounds \eqref{eq:expected_cost_inc_p} and \eqref{eq:pcbound}.
\end{proof}

\subsection{Implementation details}

Note that \cref{alg:randomp} differs from \cref{alg:random} in the sampling step and the new sample discarding step. Recall that the implementation details of 
\cref{alg:random} is presented in \cref{sec:implementation}.
In this section, assuming that we can sample a \(\theta \in [a,b]\) with p.d.f. \(P_{a,b}(\theta)\) from a given interval \([a,b]\) in \(\Theta(1)\) time, we show how to efficiently implement the sampling step of \cref{alg:randomp}.
In particular, we show how to select the dimension--interval pair \((i,[a,b])\) with probability proportional to \(|b-a|^p\). 
Once the cut is sampled, the discarding step can be implemented by simulating the splitting operation and ignoring the cut if it separates two centers that are too close.

Suppose that for each dimension \(i\), we maintain a data structure \(S_i\) that stores the intervals in \(\intervals_i(\mu_i^-, \mu_i^+)\) that are not yet split by a cut. 
There are \(k-1\) disjoint intervals in \(\intervals_i(\mu_i^-, \mu_i^+)\), and we assume they are ordered by the left coordinate and indexed \([a_1, b_1], \dots, [a_{k-1},b_{k-1}]\).
Additionally, each \(S_i\) supports the following operations:
\begin{enumerate}
    \item Initialize with all intervals in \(\intervals_i(\mu_i^-, \mu_i^+)\) in \(O(k \log k)\) time.
    \item Remove an interval in \(\intervals_i(\mu_i^-, \mu_i^+)\) in \(O(\log k)\) time.
    \item Given two indices \(\ell,r \in [k-1]\), answer the query for \(\sum_{j=\ell}^r |b_j - a_j|^p \mathbbm{1}[(a_j,b_j) \in S_i] \) in \(O(\log k)\) time.
\end{enumerate}
We can implement \(S_i\) as a segment tree. 

Now we can sample an interval from \(S_1, \dots, S_d\) as follows in \(O(d \log k)\) time: 
We first query \((1,k-1)\) in each tree, aggregate the results, and pick a dimension \(i\) with the correct probability.
Then we select an interval from \(S_i\) with the correct probability by employing a binary-search like algorithm.
To elaborate, we first query it for \((1,\lfloor (k-1)/2 \rfloor)\) and \((\lfloor (k-1)/2 \rfloor)+1, k-1)\) and use the results to randomly decide if the index of the sampled interval should be in the sub-range \(\{1,\dots,\lfloor (k-1)/2 \rfloor\}\) or \(\{\lfloor (k-1)/2 \rfloor)+1, \dots, k-1\}\).
Then we recursively apply the same procedure on the selected sub-range of indices until we end up with only one interval.
A crude runtime analysis gives \(O(\log^2 k)\) running time for the recursive sampling as there are \(O(\log k)\) queries and each query takes \(O(\log k)\) time. 
However we can modify the segment tree such that the partial sums maintained in the segment tree coincide with our queries so that each query can be answered in constant time.

\section{Lower bound}
\label{sec:lb}

In this section we show how to construct an instance of the clustering problem such that any explainable clustering has cost at least $\Omega(k^{p-1})$ times larger than the optimal non-explainable clustering for the objective function given by $\ell_p$ norm, for every $p \geq 1$. In particular, for $p=2$, this entails an $\Omega(k)$ lower bound for (explainable) $k$-means.

Let $m$ be a prime. Our hard instance is in $\R^d$ for $d = m\cdot(m-1)$ and the set of dimensions corresponds to the set of all linear functions over $\Z_m$ with non-zero slope. That is, we associate the $i$-th dimension with the function $f_i : x \mapsto (a_i x + b_i) \bmod m$, where $a_i = 1 + \lfloor i / m \rfloor$ and $b_i = i \bmod m$. Consider $k=m$ centers $\bm{\mu}^1, \dots, \bm{\mu}^k$ such that the $i$-th coordinate of the $j$-th center is given by $\mu^j_i = f_i(j)$. For each center $\bm{\mu}^j$ we create a set of $2d$ points $B_j$, each point differing from the center in exactly one dimension by either $-1$ or $+1$, i.e., $B_j = \{\bm{\mu}^j + c \cdot \bm{e}^i \mid c \in \{-1,1\}, i \in [d]\}$, where $\bm{e}^i$ denotes the standard basis vector in the $i$-th dimension. Then, our hard instance is just $\bigcup_{j \in [k]}B_j$.

Since every point is at distance exactly $1$ from its closest center, the cost of the optimal clustering $\mbox{OPT}$ is equal to the total number of points $n = 2dk$ (regardless of the $\ell_p$ norm). We prove that:
\begin{enumerate}[label={Claim \arabic*.},leftmargin=*]
  \item Any two centers are at the same distance $\delta=\Theta(d^{1/p}k)$ from each other.
  \item Any nontrivial threshold cut, i.e., one that separates some two centers, separates also some two points from the same $B_j$.
\end{enumerate}
It follows that, in any explainable clustering, already the first threshold cut (from the decision tree's root) forces some two points from the same set $B_j$ to eventually end up in two different leaves, and hence at least one of the $k$ leaves has to contain two points from two different $B_{j}$'s. The distance between these two points, by the triangle inequality, is at least $\delta - 2$, and therefore the cost of the explainable clustering is at least $\Omega(\delta^p) = \Omega(dk^p)$, which is $\Omega(k^{p-1}) \cdot \mbox{OPT}$.

\begin{proof}[Proof of Claim 1]
Fix two different centers $\bm{\mu}^{j_1}$, $\bm{\mu}^{j_2}$, $j_1 \neq j_2$. Their distance $\delta$ satisfies
\[\delta^p = 
\sum_{i \in [d]} \big|f_i(j_1) - f_i(j_2)\big|^p =
\sum_{a=1}^{m-1} \sum_{b=0}^{m-1} \big|(a j_1 + b) \bmod m - (a j_2 + b) \bmod m \big|^p.\]
For $a \in \{1,\ldots,p-1\}$, let $x(a) = (a j_1 - a j_2) \bmod m$. Observe that
\[\big|(a j_1 + b) \bmod m - (a j_2 + b) \bmod m \big| \in \{x(a), m-x(a)\},\]
and whether it is $x(a)$ or $m-x(a)$ depends on $b$, with it being $x(a)$ for exactly $m-x(a)$ values of $b$ and $m-x(a)$ for the remaining $x(a)$ values of $b$. Hence,
\[\delta^p = \sum_{a=1}^{m-1} (m-x(a)) \cdot x(a)^p + x(a) \cdot (m-x(a))^p.\]
Since $j_1 \not\equiv j_2 \pmod{m}$, we have $\big\{x(a) \mid a \in \{1,\ldots,m-1\}\big\} = \{1, \ldots, m-1\}$, and
\[\delta^p = \sum_{i=1}^{m-1} (m-i) \cdot i^p + i \cdot (m-i)^p = 2 \cdot \sum_{i=1}^{m-1} (m-i) \cdot i^p = \Theta(m^{p+2}) = \Theta(dk^p).\]
\end{proof}

\begin{proof}[Proof of Claim 2]
Let the cut be $(i, \theta)$.
It must be that $0 \le \theta < m-1$, because otherwise the cut would not separate any two centers. Note that there exists a center $\bm{\mu}^j$ with $\bm{\mu}^j_i = \lfloor \theta \rfloor$. Indeed, consider $j = (\lfloor \theta \rfloor - b_i) \cdot a_i^{-1} \bmod m$, using the fact that $a_i$ and $m$ are coprime. To finish the proof observe that the cut separates point $(\bm{\mu}^{j} + \bm{e}^i) \in B_{j}$ from all other points in $B_{j}$.
\end{proof}

\bibliographystyle{plainurl}
\bibliography{refs}

\clearpage
\appendix

\section{The minimum cut algorithm loses \texorpdfstring{$\Omega(k)$}{Ω(k)} factor for \texorpdfstring{$k$}{k}-medians}
\label{example-k}

We give an example where the minimum cut algorithm of~\cite{moshkovitz} produces a threshold tree with cost $\Omega(k)$ times the cost of an optimal clustering in the $\ell_1$-norm.
The idea is to start with the lower bound example in Section~\ref{sec:lb} since any two centers are ``far apart''.
By adding a dimension for each center in which fewer edges are cut, the minimum cut will make linearly many cuts that split only one center. 
Combined with the large distance to reassign a point to the wrong center, the result is the minimum cut algorithm losing an $\Omega(k)$ factor.
In the $\ell_1$ norm, it suffices to map half of the coordinate values to -1 and the other half to +1 and still maintain the ``large'' distance between centers.
The remainder of this section is a formal description of the instance.

Take the lower bound example from Section~\ref{sec:lb} and increase the dimension by $k$.
Now the points are in $\R^{d+k}$ with $d + k$ coordinates (recall that $d = m(m-1)$ and $k = m$ with $m$ prime).
First, we describe the $k$ centers $\U' = \{\bm{\mu'}^1, \dots, \bm{\mu'}^k\}$ as a mapping from the centers $\U = \{\bm{\mu}^1, \dots, \bm{\mu}^k\}$ in Section~\ref{sec:lb}.
For the first $d$ coordinates, $\mu'^j_i = \mu^j_i \mod 2$. 
For the last $k$ coordinates, center $\bm{\mu'}^j$ has a 0 in every coordinate $d + i$, $1 \leq i \leq k$, except coordinate $d+j$ which is a 1.

The reasoning behind this mapping is that the family of functions $f_i$ in Section~\ref{sec:lb} is the standard construction of a family of pairwise independent hash functions~\cite{Fredman84}.
In particular, if $f_{a,b}(x) = (a x + b) \bmod m$ and $h_{a,b}(x) = f_{a,b}(x) \bmod 2$, then for $x \neq y$, $h_{a,b}(x) = h_{a,b}(y)$ with probability at most $1/2$ when $a$ and $b$ are chosen uniformly at random from $\{0,1, \dots, m-1\}$, $a \neq 0$.
Recall that $f_i(x) = (a_i x + b_i) \bmod m$ where $a_i$, $b_i$ range over all elements in $\{1, 2, \dots, m-1\}$, $\{0, \dots, m-1\}$, respectively, and $\mu_i^j = f_i(j)$.
Fix any pair of centers $\bm{\mu}^{j_1}, \bm{\mu}^{j_2}$ where $j_1 \neq j_2$.
Note that picking $i \in [d]$ uniformly at random is equivalent to picking $a_i, b_i \in \{0,1, \dots, m-1\}$, $a \neq 0$, uniformly at random due to the definition of $a_i, b_i$.
We have $\mu'^{j_1}_i = (\mu^{j_1}_i \bmod 2) = (\mu^{j_2}_i \bmod 2) = \mu'^{j_2}_i$ with probability at most $1/2$ over the uniformly random choice of $i$, so any pair of centers are the same on at most $1/2$ of the coordinates.
Hence, our new centers $\U'$ are at pairwise distance $\Theta(d)$.

Now we define the remaining points. 
Let $\bm{e}^i$ be the standard $(d+k)$-dimensional $i$-th basis vector.
Similar to Section~\ref{sec:lb}, we have a set $B_j'$ for each center $\bm{\mu}'^j$ with $2d$ points where each point differs from $\bm{\mu}'^j$ on one of the first $d$ coordinates by $\pm 1$. Additionally, we want $(k-1)/2$ points to differ on one of the last $k$ coordinates.
To this end, define $B_j' = \{\bm{\mu'}^j + c \cdot \bm{e}^i \mid c \in \{-1,1\}, i \in [d]\} \cup \{ \bm{\mu'}^j - \bm{e}^{d+j}\}^{(k-1)/2}$, where $\{ \bm{\mu'}^j - \bm{e}^{d+j}\}^{(k-1)/2}$ denotes a multiset of $(k-1)/2$ copies of the point $\bm{\mu'}^j - \bm{e}^{d+j}$.

In particular, our construction has the following properties:
\begin{enumerate}
    \item The distance between any pair of centers is $\Theta(d)$.
    \item A cut $(i, \theta)$ in any dimension $i$, $1 \leq i \leq d$, and $\theta \in (0, 1)$ splits some two centers and the number of points separated is equal to the number of centers.
    \item A cut $(i, \theta)$ in any dimension $i$, $d+1 \leq i \leq d+k$, and $\theta \in (0, 1)$ splits some two centers and separates $\approx k/2$ points.
\end{enumerate}
Property (2) holds because for any dimension $1 \leq i \leq d$ and for each center $\bm{c}$, $c_i$ is either 1, in which case there is exactly one point at $\bm{c} - \bm{e}^i$, or 0, in which case there is exactly one point at $\bm{c} + \bm{e}^i$.
Note that this further implies that, when (after separating some centers) $x$ centers are remaining, the number of points separated by a cut of type (2) will be equal to $x$.  
Then the cuts in (3) will be minimum for $\approx k/2$ cuts of all minimum cuts required to separate all centers since each cut in (3) separates exactly one center from the remaining centers.
Hence we have that the minimum cut algorithm of~\cite{moshkovitz} will construct a threshold tree with $\Omega(k)$ height by making some $\Omega(k)$ cuts in dimensions $d+1$ through $d+k$.

To see that the minimum cut algorithm loses a $\Omega(k)$ factor, note that the optimal clustering has a value of $2dk + (k-1)k/2 = \Theta(k^3)$.
The first term in the sum is because each of $2d$ points in each cluster differs from the center by $\pm 1$ in exactly one of the first $d$ coordinates and the second term is because $(k-1)/2$ of the points in each cluster differ by $-1$ from the center in one of the last $k$ coordinates.
On the other hand, an algorithm that always makes a minimum cut incurs a cost of $\Theta(dk^2)$ to reassign $\approx k/2$ points to the wrong center for $\approx k/2$ centers, just for those cuts of type (2).
This gives an overall cost of $\Omega(d k^2)$ for the threshold tree produced.
Since $d = \Theta(k^2)$ we have that the minimum cut algorithm is $\Omega(k)$ away from the cost of an optimal clustering.

\section{Omitted proofs of Section~\ref{ell-1}} %
\label{sec:omitted-sec-3}

\subsection{Upper bounding cost increase of a round}
\label{appendix:cost-increase}

Here we give the formal proof of~\cref{lemma:cost-increase}, restated below.
Recall that
  \[\costinc(r) = \sum_{t\,:\, c_{\max}(t) \in (c_{\max}/2^{r+1}, c_{\max}/2^r]} c_{\max}(t) f_{i_t} (\theta_t)
  \]is the random variable that upper bounds the cost increase caused by the cuts selected during the iterations $t$ when $c_{\max}/2^{r+1} < c_{\max}(t) \leq c_{\max}/2^r$.

\costincrease*

\begin{proof}
      Let $t$ be the first iteration when $c_{\max}(t) \leq c_{\max}/2^r$ and let $M = 3 \ln(k) \cdot {2L}/{c_{\max}(t)}$ as in Lemma~\ref{lemma:l1-drop}. In the following, we use $\cost_M$ to denote the random variable that equals the cost increase caused by adding $M$ uniformly random cuts after the $t$-th iteration.  Then 
  \begin{align*}
      \E[\cost_M] \leq M \cdot c_{\max}(t)\cdot \E_{(i, \theta)}[f_i(\theta)] 
    \leq M  \cdot c_{\max}(t) \cdot \cost_1(\mathcal{U})/L
     = 6 \ln(k) \cdot \cost_1(\mathcal{U})\,,
  \end{align*}
    where the first inequality holds because $c_{\max}(t)$ is monotonically decreasing and the second inequality is by Lemma~\ref{lemma:l1-nr-cut}.
  At the same time, if we let $H$ denote the event that $c_{\max}(t)$ has halved after adding these $M$ cuts, i.e., that $c_{\max}(t+M) \leq c_{\max}(t)/2$, then $\Pr[H] \geq 1-1/k$ by Lemma~\ref{lemma:l1-drop}.
  We now upper bound the expectation of  $\costinc(r)$  by considering ``trials'' of $M$ cuts until one of these succeeds in halving $c_{\max}(t)$. Indeed,  split the sequence of random cuts selected by the algorithm after iteration $t$ into such trials $A_1, \ldots, A_\ell$ where each $A_j$ consist of $M$ cuts, and $A_\ell$ is the first successful trial in the sense that selecting (only) those cuts after iteration $t$ would cause $c_{\max}(t)$ to halve. Then we must have that $c_{\max}(t)$ has halved also after adding all the cuts in the $\ell$ trials. It follows that $\costinc(r)$ is upper bounded by the cost increase caused by the cuts in $A_1, A_2, \ldots, A_\ell$. We can thus upper bound $\E[\costinc(r)]$ by the expected cost of these trials until one succeeds:
  \[
    \sum_{i=0}^{\infty} \Pr[H] \cdot \Pr[\neg H]^i \cdot \left( \E\left[\cost_M \mid H \right] +  i \cdot \E\left[\cost_M \mid \neg H \right] \right)\,,
  \]
  where we use $\E[\cost_M \mid H]$ and $\E[\cost_M \mid \neg H]$ for the expected costs of a successful and unsuccessful  trials, respectively. By standard calculations (as for the geometric distribution), this upper bound simplifies to
$\E\left[\cost_M \mid H \right] + \frac{\Pr[\neg H]}{\Pr[H]}\E\left[\cost_M \mid \neg H \right]$. This can be further rewritten as
  \[
    \frac{1}{\Pr[H]} \cdot \left( \Pr[H] \cdot \E[\cost_M \mid H] + \Pr[\neg H] \cdot \E[\cost_M \mid \neg H]\right)  = \frac{\E[\cost_M]}{\Pr[H]} \leq 12 \ln(k) \cdot \cost_1(\mathcal{U})\,,
  \]
  where we used that $\Pr[H] \geq 1-1/k \geq 1/2$.
\end{proof}

\subsection{Theorem~\ref{thm:l1-modified-alg} implies furthermore statement of Theorem~\ref{thm:l1-alg}}
\label{sec:thm2-implies-thm1}

Recall that the difference between Algorithm~\ref{alg:random} and the modified version is that Algorithm~\ref{alg:random} samples cuts uniformly at random whereas the modified version only adds a random cut if it does \emph{not} separate two centers that are within distance $c_{\max}(t) /k^4$. 

Algorithm~\ref{alg:random} adds $k-1$ cuts to its tree. We now argue that these $k-1$ cuts are with probability at least $1-1/k$ sampled from the same distribution as the $k-1$ cuts added by the modified version. This then implies the furthermore statement of Theorem~\ref{thm:l1-alg} since Theorem~\ref{thm:l1-modified-alg} says that the expected cost of the modified algorithm is $O(\log^2 k) \cdot \cost_1(\mathcal{U})$.   To this end, consider the  $i$-th such cut and let $t$ be the iteration when the $(i-1)$-st cut was added to the tree. Then when the $i$-th cut is added there must be two centers in the same leaf at distance $c_{\max}(t)$. So the probability that two centers within distance $c_{\max}(t) /k^4$ are separated by the $i$-th cut  (which is a uniformly random cut among all cuts that would separate at least two centers in the same leaf) is at most $1/k^4$. There can be at most $\binom{k}{2}$ such pairs and so by the union bound,  we can conclude that, with probability at least $1-1/k^2$, the $i$-th cut of Algorithm~\ref{alg:random} does not separate any such nearby centers. We can thus  view the distribution from which Algorithm~\ref{alg:random} samples the $i$-th cut as follows: With probability $p \leq 1/k^2$ it samples a uniformly random cut that separates two centers within distance $c_{\max}(t) /k^4$ and with remaining probability it samples a uniformly random cut that does not separate any such centers, i.e., from the same distribution that the modified algorithm samples the $i$-th cut from. Applying the union bound over the $k-1$ cuts then yields the furthermore  statement of Theorem~\ref{thm:l1-alg}. Finally, we remark that the same arguments imply a larger success probability if applied to the modified algorithm that only adds cut that do not separate centers within distance $c_{\max}(t)/k^\ell$ for some $\ell \geq 4$.

\end{document}